\def\@settitle{\begin{center}%
  \baselineskip14\p@\relax
    \bfseries
    \normalfont\Large\textbf
  \@title
  \end{center}%
}
\newtheorem{theorem}{Theorem}
\newtheorem{lemma}{Lemma}
\newtheorem{proposition}{Proposition}
\newtheorem{obs}{Observation}
\theoremstyle{definition}
\newcommand{\E}{\mathbb{E}}
\def \qed {\hfill \vrule height6pt width 6pt depth 0pt}
\def\s{\sigma  }
\def \up#1{#1^{\mathsmaller{\uparrow}}}
\def\e{\epsilon  }
\def \la {{\ushort a}}
\def \ha {{\bar a}}
\begin{document}
\renewcommand{\thefootnote}{\fnsymbol{footnote}}

\thispagestyle{empty}
\centerline{\large \textbf{Conveying Value Via Categories}}

\vspace*{0.2in}

\centerline{Paula Onuchic \textcircled{r} Debraj Ray\footnote[2]{Onuchic: New York University, p.onuchic@nyu.edu; Ray: New York University and University of Warwick, debraj.ray@nyu.edu. We are grateful to Heski Bar-Isaac, Laura Doval, Navin Kartik, Elliot Lipnowski, Laurent Mathevet and Ennio Stacchetti for useful comments. Ray acknowledges funding under NSF grant SES-1851758. ``\textcircled{r}" indicates author names are in random order.}}

\vspace*{0.2in}

\centerline{\footnotesize{October 2020}}

\textbf{Abstract}. A sender sells an object of unknown quality to a receiver who pays his
expected value for it. Sender and receiver might hold
different priors over quality. The sender commits to
a monotonic categorization of quality. We characterize the
sender's optimal monotonic categorization. Using our characterization, we study the optimality of full pooling or full separation, the alternation of pooling and separation, and make precise a sense in which pooling is dominant relative to separation. We discuss applications, extensions and generalizations, among them
the design of a grading
scheme by a profit-maximizing school which seeks to signal student
qualities and simultaneously incentivize students to learn. Such
incentive constraints force monotonicity, and can also be embedded as a distortion of the school's prior over student
qualities, generating a categorization
problem with distinct sender and receiver priors.

\renewcommand*{\thefootnote}{\arabic{footnote}}
\section{Introduction}
A sender is about to come into possession of an object of unknown quality. Prior to knowing that quality, she commits to a \emph{categorization}. That is, she partitions the set of qualities into subsets {or \emph{categories}} --- some possibly singletons --- and verifiably commits to reveal the category in which the quality belongs. The categories  must be monotone. For instance, she can  place qualities between $a_1$ and $a_2$ into one category. She cannot, however, lump together qualities below $a_1$ or above $a_2$, where $a_2 > a_1$. Monotonicity is a natural restriction in many settings; we will discuss this assumption in some detail in Section \ref{sec:mono}.

A receiver buys the object, and pays the sender his expected value conditional on the sender's category announcement. The sender seeks to maximize expected payment.

The sender and receiver use distinct distributions to evaluate the expectation of quality. That \emph{could} mean that they hold different priors, and so disagree about the underlying distribution of qualities. But there are other possibilities. For instance, the sender might be an intermediary for individuals who differ in their optimism or pessimism about the value of the object they own, and she might be more responsive to, say, optimistic owners who are also more generous with their fees. But there are other situations with common priors that map to a reduced form with different priors.
For instance, the sender could  have state-dependent payoffs over and above her payment from the receiver: {such} payoffs can be formally mapped into a distinct prior. Or there could be incentive constraints that serve to effectively distort the measure that the sender employs to maximize expected value, \emph{even} though the priors are common. It is even possible that the resulting ``measure" isn't a distribution at all. Temporarily postponing this discussion and our last cryptic observation,  note that a difference in priors, either primitive or induced, is what makes the  problem nontrivial. With identical priors, nothing is gained or lost by categorization.

Our main result, Theorem \ref{th:1}, describes an optimal categorization.  It consists of \emph{pooling intervals} in which all qualities are declared to be in the same category and \emph{separating intervals} in which all qualities are revealed. We build an auxiliary function $H:[0,1]\rightarrow[0,1]$, where $H(x)$ is the probability, from the sender's perspective, that the quality is below the $x$ quantile in the receiver's prior. The theorem shows that an optimal categorization can be built by pooling in all intervals where $H$ differs from its lower convex envelope, and separating in all intervals where these two objects coincide. 

Theorem \ref{th:1} can be applied to study full pooling, as well as local pooling on intervals. The former is optimal when the receiver's prior dominates the sender's prior under first stochastic dominance (Proposition \ref{pr:pool}). That idea extends to local pooling (Proposition \ref{prop:poolint}). However, these intuitive findings are not paralleled for separation, explored in Propositions \ref{pr:sep} and \ref{prop:sepint}. Full separation is optimal if and only if the sender's prior dominates the receiver's in the \emph{likelihood ratio order}. In additional contrast, this dominance relationship over an interval  continues to be necessary for separation on that interval, but is not sufficient.


This asymmetry across pooling and separation has other implications. In Proposition \ref{pr:flip}, we describe a precise sense in which pooling is ``more widespread" than separation. For each set of sender and receiver priors, create a \emph{flipped} problem by switching the priors across sender and receiver. For these symmetric pairs of problems, every possible quality is pooled either in one problem or the other, but the same is not true of separation: in general, there are open sets of qualities that are pooled in both problems, and separated in neither.

In Proposition \ref{pr:alt}, we show that if both priors have densities, the optimal monotonic signal involves alternate zones of pooling and separation.\footnote{Without differentiability, it is possible to have adjacent pooling intervals without separation in between.} If we view the pools as discrete grades assigned to a continuum of potential qualities, then the separating regions that lie between two grades may be interpreted as a deliberate choice to render precise descriptions of qualities that lie close to the grade boundaries.
 
There are several applications of the categorization problem. Intermediaries often provide consumers with information about the quality of purchased goods or services. For instance, financial rating agencies classify assets according to  riskiness, certifying companies underwrite eco-friendly labels, bond issues are rated by agencies, the Department of Health provides restaurants with sanitary inspection grades, and schools grade students according to their academic achievements.  In Section \ref{sec:app}, we discuss some applications that illustrate how different sender-receiver priors could emerge from more primitive environments with common priors. Among these, we conduct an extended analysis of educational grade design, because this application simultaneously delivers monotonicity and non-common priors, two core features of our more abstract model.

Educational grades play a dual role: they signal ability, and they also incentivize learning, which could have intrinsic value over and above ability. In this sense, the problem is a mixture of information \emph{and} mechanism design, and goes beyond education. For instance, Moody's rating structure allows lenders to learn about the inherent risks of bond issuers, while incentivizing issuers to take less risk. One might wonder how these concerns are nested by our admittedly spartan model, which has no moral hazard. We show that such a setting could generates constraints that force monotonic categorization. Non-monotonic categories would not satisfy the incentive constraint  --- higher-ability students could always put in less effort and get the same grade as their lower-ability counterparts.

Furthermore, any education design model must make explicit what schools (or senders) are trying to maximize. In our case it is tuition revenue. That revenue is related to the surplus a school can extract from its students, without violating their participation constraints. Here, the receiver (an employer) pays the \emph{student}, not the school, and the student pays the school. However, what a student is willing to pay the school depends on her own prior regarding ability. If different students have different priors, the ``lowest belief type" among admitted students must pin down the equilibrium tuition rate. It is this belief that effectively becomes the prior of the school, even though its own beliefs --- and those of the receiver --- might coincide with the population distribution of abilities. 

Proposition \ref{pr:obj*} folds all these considerations into a special case of our  model, with an appropriately induced sender prior. Sender-receiver priors become different (despite being the same in the original setting), and as mentioned earlier, the induced sender prior may not be a cdf at all. This nesting generates additional insight into the grade design problem. Proposition \ref{pr:fullp} uses Proposition \ref{pr:pool} on full pooling to describe when a school  chooses to keep  quality unrevealed by a deliberate policy of coarse grading. These conditions are more propitious when learning has little intrinsic value (see also Lizzeri 1999), when the effort costs of schooling are accounted for in the optimal tuition rate (students or their parents may incompletely account for these), and when the lowest belief type has more information about her own ability. In a Supplementary Appendix, we also explore a special case in which the school optimally uses a \emph{lower censorship} categorization, where all abilities below some threshold are pooled, and all abilities are separated thereafter, each active at different points of the ability and learning distribution. 

Section \ref{sec:model} introduces the model and states our main result. Section \ref{sec:app} discusses several applications, including the one to educational grades. Section \ref{sec:mono} discusses our central monotonicity restriction. Section \ref{sec:nl} contains some remarks on a further generalization of our model to nonlinear payoff functions. Section \ref{sec:lit} has a detailed literature review. All proofs are contained in Section \ref{sec:proofs}, and several additional results indicated in the paper are collected in a Supplementary Appendix.

\section{Model\label{sec:model}}
A sender (female) is about to come into possession of an object of  quality $a$, ``distributed" according to some function $S$.  A receiver (male) buys the object  and obtains value equal to its quality. He believes that quality is distributed according to a continuous cdf $R$, strictly increasing on $[\la,\bar{a}]$ with $R(\la) = 0$ and $R(\bar a) =1$. He stands ready to pay his expected value for the object, where expectations are computed using $R$ and any information that the sender might have chosen to reveal.

One interpretation is that sender and receiver hold distinct priors. But there are others, including the possibility that $S$ and $R$ are reduced forms of an extended model, possibly one with common priors. For instance, the schooling application in Section \ref{sec:educ} has a common prior, but generates a special case in which $S$ is not only distinct from $R$, but isn't a cdf at all. So while it is convenient for the exposition to view $S$ as a  cdf, our results do not rely on such a presumption. We assume, instead, that $S$ has bounded variation on the same support $[\la,\bar{a}]$ as $R$, with $S(\la)$ finite and $S(\bar{a})=1$, and that it is left-continuous with only upward jumps. It would indeed be a cdf if it were nondecreasing with $S(\la)=0$,\footnote{The dominant convention for cdfs ask for right continuity, but our (equally valid) convention eases notation. The assumed absence of downward jumps also eases notation and some  technicalities, but is  dispensable.} but to accommodate some applications, we do not formally make these assumptions.

Prior to learning quality, the sender commits to a monotonic \emph{categorization}, revealing the (possibly degenerate) interval in which quality lies. For instance, she can  create two categories for qualities between $\la$ and $a_1$ and qualities between $a_1$ and $\bar{a}$. She cannot, however, lump together qualities below $a_1$ or above $a_2 > a_1$ into one category without also including all qualities between $a_1$ and $a_2$ in that same category. The sender chooses  categories so as to maximize her expected revenue from the sale of the object. 

Formally, let $\mathcal{P}\subset(\la,\bar{a})$ be an open set. Then $\mathcal{P}$ can be written as  a countable collection of disjoint intervals of the form $(p,p')\in[\la,\bar{a}]$. Define:
\begin{equation}
A(x)=
\begin{cases}
\mathbb{E}_R\left[ y|y\in[p,p')\right]\text{, if }x\in[p,p')\text{ for some }(p,p')\in\mathcal{P}\\
x\text{, otherwise.}
\end{cases}
\label{eq:A}
\end{equation}

Let  $\mathcal{A}_R$ be the collection of all such functions.
This set is nonempty; e,g., $A(x) =  \mathbb{E}_R(a)$ for all $x$ lies in it.
The sender picks $A \in \mathcal{A}_R$ to 
\begin{equation}
\mbox{maximize } \int_\la^{\bar{a}} A(x)dS(x),
\label{eq:maxa}
\end{equation}
noting that this Stieltjes integral is just expected value under $S$ when $S$ is a cdf.\footnote{The Stieltjes integral is well-defined because $S$ has bounded variation and is left continuous, while the integrand $A$ is right continuous.}
Observe that any $A \in \mathcal{A}_R$ has (at most) countably many disjoint intervals on which it is constant; these are pooling intervals. Adjacent pooling intervals have distinct constant images. Qualities that are not in pooling intervals are in separating regions. By convention, we always separate $\ha $ (so $A(\ha ) = \ha $). All pooling intervals are closed on the left and open on the right.  These choices are without loss of generality because (a) $R$ has no mass points, and (b) $S$ is increasing at any mass point, and so the sender would prefer to close any discontinuity in $A$ on the right rather than the left.

%
%

The categories implicit in any $A \in \mathcal{A}_R$ consists of all the intervals, and \emph{every element} of every ``separating" interval or singleton. Once told the categorization and the particular category in which the object lies, the conditional value of the object --- in the receiver's eyes --- is given by the expectation of $a$ in that category element, under $R$. This is precisely equation (\ref{eq:A}).  
To illustrate, let $R$ be uniform on $[0,1]$, and $S$ a uniform cdf on an interval of length $2\e$ with mean $\{ 3/4\}$. If the sender fully pools, then the associated $A(x)$ has constant value $1/2$ on $[0,1)$, since the expectation is calculated under $R$ with no additional information. The sender then integrates over $A$ with respect to $S$, and so her value under this policy is $1/2$, which is lower than the expectation of $x$ under $S$. She could achieve the latter by committing to reveal quality; i.e., by choosing $A(x) = x$. 

The sender can do still better, though. If she pools all qualities in the two intervals $\left[ 0,\frac{3}{4}-\e\right)$ and  $\left[ \frac{3}{4}-\e,1\right)$, then $A(x)=\frac{3}{8}-\frac{\e}{2}$ if $x$ lies in the former interval and $A(x)=\frac{7}{8}-\frac{\e}{2}$ if $x \in \left( \frac{3}{4}-\e, 1\right)$. The sender then integrates over $A$ with respect to $S$, obtaining $\frac{7}{8}-\frac{\e}{2}$.

{\subsection{Optimal Categorization\label{sec:uniform}} 
For $A \in \mathcal{A}_R$, each pooling interval is written as $[p, p')$, where $p$ is the left edge and $p'$ the right edge. Define the \emph{weighting} $\Psi $ associated with $A\in\mathcal{A}_R$ by:
\begin{figure}[t!]
\begin{tikzpicture}[>=stealth, scale=.8]
    \begin{axis}[thick,
        xmin=0,xmax=1.08,
        ymin=0,ymax=1.08,
        axis x line=middle,
        axis y line=middle,
        axis line style=->,
        x label style={at={(axis description cs:0.5,-0.1)},anchor=north},
        ticks=none
        ]
 \addplot[no marks,very thick, -] expression[domain=0:1,samples=200]{1-(1-x^(1/2))/(1-.9*x)} 
                    node[pos=1,anchor=west]{$S$}; 
    \end{axis}
	\node at (0,0) [below]{$t_0=0$};
	\node at (0,0)[circle,fill,inner sep=1pt]{};
	\draw[black,snake, very thick] (0,0)--(6.37,0);
	\node at (6.37,0) [below]{$\bar{a}$};
	\node at (6.37,0)[circle,fill,inner sep=1pt]{};

    \begin{scope}[xshift=9cm]
       \begin{axis}[thick, 
        xmin=0,xmax=1.08,
        ymin=0,ymax=1.08,
        axis x line=middle,
        axis y line=middle,
        axis line style=->,
        x label style={at={(axis description cs:0.5,-0.1)},anchor=north},
        ticks=none
        ]
 \addplot[no marks,very thick, -] expression[domain=0:1,samples=200]{1-(1-x^(1/2))/(1-.9*x)} 
                    ; 
  \addplot[no marks, very thick, red, -] expression[domain=0:0.4,samples=100]{((1-(1-.4^(1/2))/(1-.9*.4))/.4)*x};
  \addplot[no marks, very thick, red, -] expression[domain=.4:1,samples=100]{1-(1-x^(1/2))/(1-.9*x)} 
  node[pos=1,anchor=west]{{\color{black}$\Psi$}};
    \end{axis}
    \node at (0,0) [below]{$t_0=0$};
	\node at (0,0)[circle,fill,inner sep=1pt]{};
	\node at (2.48,0) [below]{$t_1$};
	\node at (2.48,0)[circle,fill,inner sep=1pt]{};
	\draw[black,snake, very thick] (2.48,0)--(6.37,0);
	\draw[dashed, thin](2.48,0)--(2.48,2.37);
	\node at (6.37,0) [below]{$\bar{a}$};
	\node at (6.37,0)[circle,fill,inner sep=1pt]{};
        \end{scope}
\end{tikzpicture}

\caption{\small{$R$ (not shown) is uniform on $[0, \bar{a}]$ and $S$ is as pictured. Jagged lines indicate separating regions. In the first panel, there is full separation and $\Psi = S$. In  the second panel, the sender pools on $[t_0, t_1)$ and separates elsewhere, and $\Psi$ is shown in red.}}
\label{fig:succ}
\end{figure}
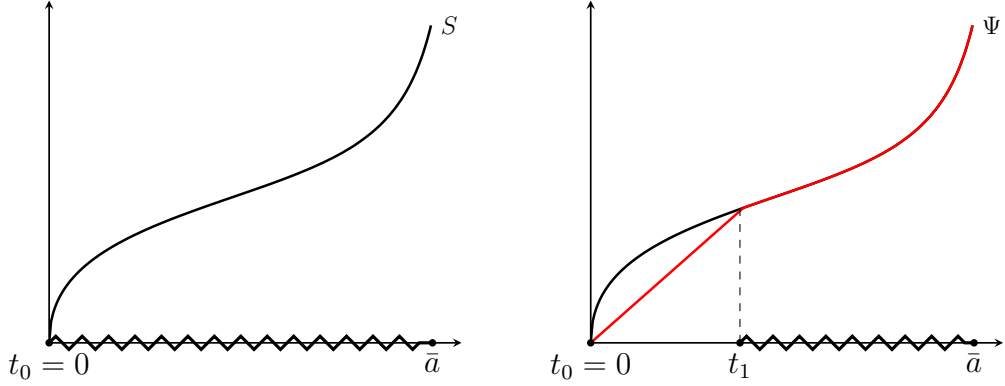
\begin{equation}
\scalebox{0.9}{$\displaystyle{\Psi (x, A )}$}=\begin{cases}
\scalebox{0.9}{$\displaystyle{S(t)+\left[ R(x)-R(t)\right]\left[\frac{S(t')-S(t)}{R(t')-R(t)}\right]\text{  if }x \mbox{ is in a pooling interval $[p,p')$}}$};\\[0.5em]
\scalebox{0.9}{$\displaystyle{S(x)}$ \text{  otherwise.}}
\end{cases}
\label{eq:psi}
\end{equation}
With $R$ strictly increasing, $\Psi$ is well-defined. Also, $\Psi (\la, A) = S(\la)$ and $\Psi (\bar a, A)=1$. If $S$ is a cdf, so is $\Psi$.  If $S$ has bounded variation, so does $\Psi$.

Next, for any continuous function $H:[0,1]\rightarrow \mathbb{R}_+$, define its \emph{lower convex envelope} by
\[
\breve{H}(x) \equiv \min\{y | (x,y)\in \mbox{Co}(\mbox{Graph}(H))\},
\]
This chalks out --- uniquely --- the largest convex function we can place below $H$.
In what follows, we study the particular function $H = S\circ R^{-1}$. 
\begin{obs}
\textnormal{(i)} The value to the sender, \scalebox{1}{$\displaystyle{\int_\la^{\bar{a}}A(x)dS(x)}$}, equals \scalebox{1}{$\displaystyle{\int_\la^{\bar{a}} xd\Psi (x, A)}$}.

\textnormal{(ii)}  Furthermore, for any $A\in\mathcal{A}_R$ and $z \in [0,1]$,  $\Psi (R^{-1}(z), A)\geqslant \breve{H}(z)$.
\label{obs:v}
\end{obs}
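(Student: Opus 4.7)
\textbf{Plan for Observation \ref{obs:v}.}

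For part (i), the strategy is to split both integrals into the contribution from separating regions and the contribution from pooling intervals, and show a termwise match. On any separating point $x$ we have $A(x)=x$ and (by the second line of \eqref{eq:psi}) $\Psi(\cdot,A)$ agrees with $S$ in a neighborhood of $x$, so the integrands coincide automatically. The substantive work is on a pooling interval $[p,p')$. There the left side contributes $A(p)\cdot[S(p')-S(p)]$, since $A$ is constant on $[p,p')$ with value $\mathbb{E}_R[y\mid y\in[p,p')]=\tfrac{1}{R(p')-R(p)}\int_p^{p'}y\,dR(y)$. On the right side, \eqref{eq:psi} says $\Psi(\cdot,A)$ is an affine function of $R$ on $[p,p')$ with slope $\tfrac{S(p')-S(p)}{R(p')-R(p)}$, so $d\Psi(x,A)=\tfrac{S(p')-S(p)}{R(p')-R(p)}\,dR(x)$, and $\int_p^{p'}x\,d\Psi(x,A)$ equals
\[
\frac{S(p')-S(p)}{R(p')-R(p)}\int_p^{p'}x\,dR(x)=\left[\frac{\int_p^{p'}x\,dR(x)}{R(p')-R(p)}\right]\cdot[S(p')-S(p)],
\]
which matches. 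Summing over the (at most countable) collection of pooling intervals plus the separating part yields (i). The only care required is with jumps of $S$: because pooling intervals are closed on the left and open on the right, and $S$ has only upward jumps while $A$ is right-continuous, the endpoint conventions in \eqref{eq:A} and \eqref{eq:psi} align, so no mass gets double-counted or dropped.

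For part (ii), I change variables to $z=R(x)$ and work with $H=S\circ R^{-1}$ directly. If $x=R^{-1}(z)$ lies in a separating region, then $\Psi(R^{-1}(z),A)=S(R^{-1}(z))=H(z)\geq \breve H(z)$ by definition of the lower convex envelope, and we are done.

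The interesting case is when $x\in[p,p')$ is in a pooling interval. Setting $z_1=R(p)$ and $z_2=R(p')$, the first line of \eqref{eq:psi} becomes, under the change of variables,
\[
\Psi(R^{-1}(z),A)=H(z_1)+\frac{H(z_2)-H(z_1)}{z_2-z_1}(z-z_1),
\]
i.e., the affine interpolant between $(z_1,H(z_1))$ and $(z_2,H(z_2))$. Since $\breve H\leq H$ pointwise, this interpolant has endpoints that dominate the endpoints of the interpolant between $(z_1,\breve H(z_1))$ and $(z_2,\breve H(z_2))$; the latter interpolant in turn dominates $\breve H$ on $[z_1,z_2]$ by convexity of $\breve H$. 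Chaining these two comparisons gives $\Psi(R^{-1}(z),A)\geq \breve H(z)$ for $z\in[z_1,z_2)$.

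The potentially tricky step is part (i): one must verify that the Stieltjes integrals behave as claimed despite possible atoms of $S$ and the mix of left/right continuity. This is the only place where the paper's conventions on $S$ (left-continuous, only upward jumps) actually matter; once those are invoked, both parts are short and largely bookkeeping. Part (ii) is conceptually the more important half since it gives the lower bound that Theorem \ref{th:1} will ultimately match.
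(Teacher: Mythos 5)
Your proposal is correct and follows essentially the same route as the paper: part (i) is the identical termwise decomposition into separating regions and pooling intervals, using $d\Psi(x,A)=\frac{S(p')-S(p)}{R(p')-R(p)}\,dR(x)$ on each pool, and part (ii) is the same observation that $\Psi(R^{-1}(\cdot),A)$ coincides with $H$ on separating percentiles and is the chord of $H$ on pooling percentiles. The only cosmetic difference is that you close part (ii) by chaining ``chord of $H$ $\geq$ chord of $\breve H$ $\geq$ $\breve H$'' (using $\breve H\leq H$ and convexity of $\breve H$), whereas the paper appeals directly to $\mathrm{Graph}(\Phi(\cdot,A))\subset \mathrm{Co}(\mathrm{Graph}(H))$ and the definition of $\breve H$ as the minimum of the convex hull — an equivalent one-line justification.
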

Part (i) states that the sender's value under any categorization $A$ is  found by integrating $x$ over $[\la,\bar{a}]$ under the weighting $\Psi(\cdot, A)$, which in separating regions ``follows" the sender's prior $S$ and in pooling regions ``follows" the receiver's prior $R$. See Figure \ref{fig:succ}. Moreover, integration by parts reveals that 
\begin{align}
\label{eq:ibp} 
\int_\la^{\bar{a}} xd\Psi (x, A)&= [1- \Psi(\bar a, A)]{\bar a}-[1- \Psi(\la,A)]\la+\int_\la^{\bar{a}}(1-\Psi(x,A))dx \nonumber \\   
&= -[1-S(\la)]\la+\int_\la^{\bar{a}}(1-\Psi(x,A))dx,
\end{align}
where we use $\Psi(\la,A) = S(\la )$ and $\Psi(\bar a, A) = 1$. Therefore we will have found our optimal categorization if we can find a suitable pointwise lower bound to every $\Psi $. That motivates part (ii), which connects the weighting $\Psi $ to the lower convex envelope of $H=S\circ R^{-1}$. Notice that $\breve{H}$ has zones where it coincides with locally convex segments of $H$ (not necessarily all of them), and other intervals where it is a straight line connecting two points of the form $(z,H (z))$ and $(z',H (z'))$. We can thus fashion a categorization by pooling the  intervals where $\breve{H}$ is a straight line and by separating everywhere else.
\begin{obs}
There exists $A^*\in\mathcal{A}_R$ such that $\Psi (R^{-1}(z), A^*)=\breve{H}(z)$ for all $z\in[0,1]$.
\label{obs:ex}
\end{obs}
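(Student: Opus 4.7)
The plan is to read $A^*$ directly off the geometry of $\breve H$. The lower convex envelope $\breve H$ is convex and satisfies $\breve H \le H$, so the set $U = \{z\in(0,1):\breve H(z)<H(z)\}$ decomposes as a countable disjoint union of maximal open intervals $\bigcup_i (z_i,z_i')$, on each of which $\breve H$ is the chord joining $(z_i,H(z_i))$ to $(z_i',H(z_i'))$; off $U$, $\breve H = H$. Pull this decomposition back through $R^{-1}$ by setting $p_i = R^{-1}(z_i)$, $p_i' = R^{-1}(z_i')$, and $\mathcal P^* = \bigcup_i(p_i,p_i')$. Since $R$ is a continuous strictly increasing bijection of $[\la,\ha]$ onto $[0,1]$, this gives disjoint open intervals in $(\la,\ha)$, so $\mathcal P^*$ is a legitimate pooling specification and induces a candidate $A^*\in\mathcal A_R$ via (\ref{eq:A}).

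To verify $\Psi(R^{-1}(z),A^*)=\breve H(z)$, split on whether $z\in U$. If $z\notin U$, then $\breve H(z) = H(z) = S(R^{-1}(z))$, and this equals $\Psi(R^{-1}(z),A^*)$ by (\ref{eq:psi}): the second branch applies if $R^{-1}(z)$ lies outside every pooling interval, while at a left endpoint $R^{-1}(z)=p_i$ the first branch collapses to $S(p_i)$ because the factor $R(x)-R(t)$ vanishes. If instead $z\in(z_i,z_i')$, then $R^{-1}(z)\in[p_i,p_i')$, and the first branch of (\ref{eq:psi}), after substituting $R(p_i)=z_i$, $R(p_i')=z_i'$, $S(p_i)=H(z_i)$ and $S(p_i')=H(z_i')$, yields
\begin{equation*}
\Psi(R^{-1}(z),A^*) \;=\; H(z_i) + (z-z_i)\,\frac{H(z_i')-H(z_i)}{z_i'-z_i},
\end{equation*}
which is precisely the chord that defines $\breve H$ on $(z_i,z_i')$. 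The endpoints $z=0$ and $z=1$ fall under the first case, since $(0,H(0))$ and $(1,H(1))$ are extreme in the first coordinate of $\mathrm{Co}(\mathrm{Graph}(H))$, so $\breve H$ agrees with $H$ there.

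The main obstacle is the structural claim that on each maximal gap $(z_i,z_i')$, $\breve H$ is the chord through $(z_i,H(z_i))$ and $(z_i',H(z_i'))$. This is classical for continuous $H$, but $S$ is only assumed to be left-continuous with upward jumps, so $H = S\circ R^{-1}$ may be discontinuous. One must then interpret $\breve H$ as the largest convex minorant of (the lower-semicontinuous version of) $H$, and check that the paper's conventions---left-continuity of $S$ plus the left-closed/right-open form of pooling intervals---cause the ``left'' values $S(p_i)$ and $S(p_i')$ that appear in $\Psi$ to be the same values that anchor the chord of $\breve H$ at $z_i$ and $z_i'$. Once this compatibility is established, the remainder is a direct substitution of definitions.
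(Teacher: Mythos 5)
Your construction---pooling on the maximal intervals where $\breve{H}<H$, where $\breve{H}$ is the chord anchored at values of $H$, and separating wherever the two coincide, then verifying $\Psi(R^{-1}(z),A^*)=\breve{H}(z)$ by direct substitution into (\ref{eq:psi})---is exactly the categorization the paper fashions in the discussion preceding the observation, so your proof matches the paper's approach. The discontinuity obstacle you flag is benign under the paper's conventions: since $S$ is left-continuous of bounded variation with only upward jumps and $R^{-1}$ is continuous and increasing, $H=S\circ R^{-1}$ is lower semicontinuous, so the minimum in the definition of $\breve{H}$ is attained and the chord structure on maximal gaps (anchored at $H(z_i)$ and $H(z_i')$, which are the left-limit values appearing in $\Psi$) goes through unchanged.
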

Together, Observations \ref{obs:v}(ii) and \ref{obs:ex}  imply that there is a categorization $A^*$ such that for every categorization  $A\in\mathcal{A}_R$ and $z \in [0,1]$,  $\Psi (R^{-1}(z), A) \geqslant \Psi (R^{-1}(z), A^*)$. In other words,
\begin{equation}
\Psi(x,A)\geqslant \Psi(x,A^*) \mbox{ for every } x \in [\la,\bar a] \mbox{ and } A\in\mathcal{A}_R.
\label{eq:fosd}
\end{equation}
Consequently, (\ref{eq:ibp}) and (\ref{eq:fosd}) implies that  sender value is weakly higher under $A^*$ than under any $A\in\mathcal{A}_R$, which yields
\begin{theorem}
A solution to the sender's problem exists. A categorization $A^*$ is a solution if and only if  $\Psi(R^{-1}(z),A^*)=\breve{H}(z)$ for all $z\in[0,1]$, where $H = S \circ R^{-1}$.
\label{th:1}
\end{theorem}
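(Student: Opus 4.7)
The plan is to combine the four building blocks already in the excerpt---Observations \ref{obs:v}(i), \ref{obs:v}(ii), \ref{obs:ex}, and the integration-by-parts identity (\ref{eq:ibp})---so that both directions of the characterization drop out, with only a brief left-continuity argument needed to lift ``almost everywhere'' equality to ``everywhere.''

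For existence and sufficiency, I would reason as follows. Observation \ref{obs:ex} supplies an explicit $A^* \in \mathcal{A}_R$ with $\Psi(R^{-1}(z),A^*) = \breve H(z)$ for every $z \in [0,1]$. Together with Observation \ref{obs:v}(ii), this gives the pointwise domination $\Psi(x,A) \geqslant \Psi(x,A^*)$ for every $x \in [\la,\bar a]$ and every $A \in \mathcal{A}_R$---precisely (\ref{eq:fosd}). Observation \ref{obs:v}(i) followed by (\ref{eq:ibp}) then expresses the sender's value under $A$ as $-[1-S(\la)]\la + \int_\la^{\bar a}(1-\Psi(x,A))dx$, a weakly decreasing functional of $\Psi(\cdot,A)$ in the pointwise order, so $A^*$ attains the maximum. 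For necessity, let $A^{**}$ be any optimal categorization: since its value coincides with that of $A^*$, (\ref{eq:ibp}) forces $\int_\la^{\bar a}[\Psi(x,A^{**})-\Psi(x,A^*)]dx = 0$, and the integrand is nonnegative by (\ref{eq:fosd}), so it vanishes Lebesgue-almost everywhere on $[\la,\bar a]$. Substituting $\Psi(x,A^*) = \breve H(R(x))$ from Observation \ref{obs:ex}, I obtain $\Psi(x,A^{**}) = \breve H(R(x))$ for almost every $x$.

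The only genuine obstacle is upgrading this equality from ``almost everywhere'' to ``everywhere,'' which I would handle via left-continuity. Inside any pooling interval of $A^{**}$, $\Psi(\cdot,A^{**})$ is affine in $R(\cdot)$ and hence continuous; in separating regions it coincides with the left-continuous function $S$; and at each endpoint of a pooling interval $[p,p')$, direct evaluation of (\ref{eq:psi}) together with left-continuity of $S$ shows that $\Psi(\cdot,A^{**})$ is left-continuous throughout $[\la,\bar a]$. The composition $\breve H \circ R$ is also left-continuous, because $\breve H$ is convex (hence continuous on $(0,1)$) and $R$ is continuous. For any $x_0 \in (\la,\bar a]$, I choose a sequence $x_n \uparrow x_0$ from the full-measure equality set; passing to the limit on both sides yields $\Psi(x_0,A^{**}) = \breve H(R(x_0))$. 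The boundary case $x_0 = \la$ is automatic, since $\Psi(\la,A) = S(\la)$ for every $A$ by direct inspection of (\ref{eq:psi}), and Observation \ref{obs:ex} already forces $\breve H(0) = S(\la)$. Writing $x_0 = R^{-1}(z_0)$ then delivers the characterization for every $z_0 \in [0,1]$, completing the proof.
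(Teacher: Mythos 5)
Your proposal is correct and follows essentially the same route as the paper: Theorem \ref{th:1} is obtained there exactly by combining Observation \ref{obs:v}, Observation \ref{obs:ex}, the integration-by-parts identity (\ref{eq:ibp}) and the pointwise dominance (\ref{eq:fosd}). Your explicit treatment of the ``only if'' direction (a.e.\ equality upgraded to everywhere via left-continuity of $\Psi(\cdot,A)$ and of $\breve{H}\circ R$) fills in a step the paper leaves implicit, and is sound; note only that the endpoint $x_0=\bar a$ is handled most simply by observing $\Psi(\bar a,A)=1=\breve{H}(1)$ for every $A$, just as you did at $\la$.
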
 
Notice that finding the lower convex envelope of $S \circ R^{-1}$ is equivalent to finding the upper concave envelope of the inverse $R \circ S^{-1}$, and so this characterization would be equivalent to the \emph{concavification} of this inverse, if $S$ were invertible. However, because $S$ isn't even assumed to be a cdf, we cannot (and do not need to) make this assumption.
}

\begin{figure}[t!]
\begin{center}
\hspace*{-0.25in} 
\subfloat[{\small Concave Receiver Prior}]{
\includegraphics[width=0.43\textwidth]{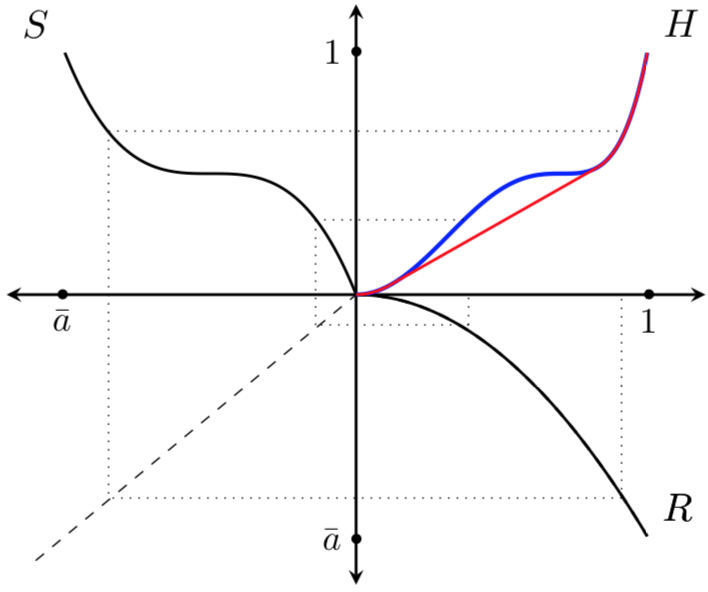}} 
\subfloat[{\small Convex Receiver Prior}]{
\includegraphics[width=0.43\textwidth]{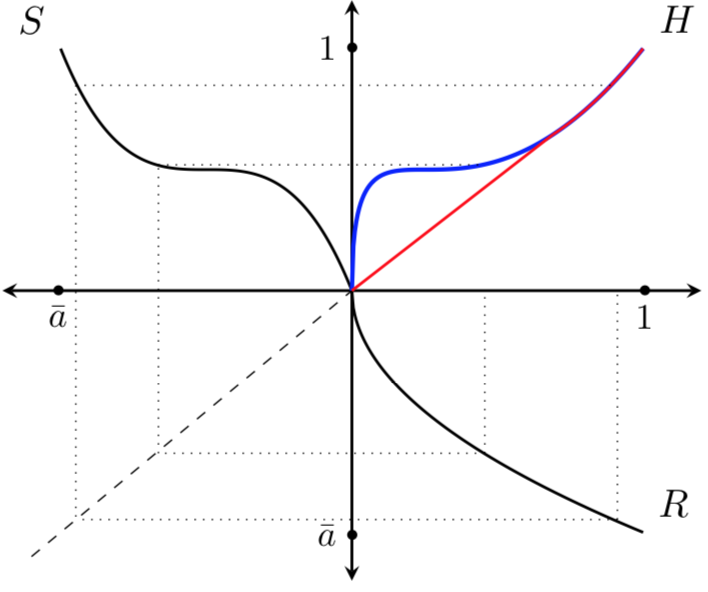}}
\end{center}
\begin{caption}
{\small{The construction of  $H = S\circ R^{-1}$ and its lower convex envelope. In panel A,  $R$ is concave. In panel B, $R$ is convex. In both cases, $S$ has a reverse-logistic shape. $H$ is shown in blue and its lower convex envelope in red.}\label{fig:H}}
\end{caption}
\end{figure}

Figure \ref{fig:H} displays the optimal categorization. $S$ (shown horizontally flipped) has a ``reverse-logistic" shape. In panel A, $R$ is concave (shown vertically flipped) and in panel B, $R$ is convex (again vertically flipped). In each panel, the function $H = S\circ R^{-1}$ is derived, and displayed in the north-east quadrant. Its lower convex envelope is also displayed.
To see informally how this categorization behaves, note that if $R$ is concave (panel A), this compresses the concave segment of the derived $H$, and elongates the convex segment. The resulting pool is therefore ``small." This is reasonable: concavity of $R$ signals receiver pessimism about quality, so it is better that the sender separate qualities to a greater degree (relative to uniform $R$). In fact, Panel A shows two distinct zones of separation. In panel B, $R$ is convex --- the receiver is relatively optimistic. That accentuates the concave segment of $H$ and induces greater pooling, which is intuitive in the face of that optimism. 

Under mild additional assumptions, it is possible to write down an algorithmic procedure that generates the pooling and separating intervals implicit in the characterization of Theorem \ref{th:1}. The procedure is connected to the solution method in Rayo (2013), though the connection to our characterization is nontrivial; see Supplementary Appendix for details.

\subsection{Full Pooling and Pooling Intervals}  Theorem \ref{th:1} has implications for full pooling and full separation, and can also be used to study pooling and separation on sub-intervals. Headway on the latter is  limited by the fact that  optimal categorization has  global features which cannot be solved by local inspection. We begin with pooling. The proposition states that if the receiver is more optimistic than the sender in the sense of first order stochastic dominance or ``fosd"), then the sender should not reveal any information that might make the receiver learn the more pessimistic truth --- and so full pooling on $[\la , \ha)$ is optimal.\footnote{As already noted, we always separate $\ha $ by convention; this makes no difference to the analysis.}
\begin{proposition}
Full pooling on $[\la , \ha)$ is optimal if and only if $S(x)-S(\la)\geq [1-S(\la)]R(x)$ for all $x\in[\la,\bar{a}]$. If $S$ is a cdf with $S(\la)=0$, then this condition is equivalent to $R$ fosd $S$. 
\label{pr:pool}
\end{proposition}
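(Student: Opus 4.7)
The plan is to apply Theorem \ref{th:1} directly. Full pooling corresponds to the categorization $A^*$ whose only pooling interval is $[p,p') = [\la,\bar a)$. Compute the associated weighting $\Psi$ from equation (\ref{eq:psi}): for any $x \in [\la,\bar a)$,
\[
\Psi(x, A^*) = S(\la) + [R(x) - R(\la)]\cdot\frac{S(\bar a) - S(\la)}{R(\bar a) - R(\la)} = S(\la) + R(x)\left[1 - S(\la)\right],
\]
using $R(\la) = 0$, $R(\bar a) = 1$, and $S(\bar a) = 1$. Writing $z = R(x)$, this gives $\Psi(R^{-1}(z), A^*) = S(\la) + z[1 - S(\la)]$, which is exactly the chord of $H = S\circ R^{-1}$ joining the endpoints $(0, H(0)) = (0, S(\la))$ and $(1, H(1)) = (1,1)$.

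By Theorem \ref{th:1}, full pooling is optimal if and only if this chord coincides with $\breve H$ on $[0,1]$. Because the convex hull of $\mathrm{Graph}(H)$ meets the vertical lines $\{z = 0\}$ and $\{z = 1\}$ only at $(0, H(0))$ and $(1, H(1))$ respectively, we always have $\breve H(0) = H(0)$ and $\breve H(1) = H(1)$, so convexity of $\breve H$ forces $\breve H$ to lie weakly below the chord. Conversely, if the chord lies weakly below $H$ on $[0,1]$, then it is itself a convex function dominated by $H$, so $\breve H$ is at least the chord. Hence $\breve H$ equals the chord if and only if the chord lies pointwise below $H$.

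Translating the pointwise inequality via $z = R(x)$ gives
\[
S(\la) + R(x)\left[1 - S(\la)\right] \leq S(x) \quad \text{for all } x \in [\la, \bar a],
\]
i.e., $S(x) - S(\la) \geq [1 - S(\la)] R(x)$, which is the stated condition. For the second assertion, when $S$ is a cdf with $S(\la) = 0$, the condition collapses to $S(x) \geq R(x)$ for all $x$, which is exactly $R$ \textsc{fosd} $S$.

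No real obstacle: the proof is essentially bookkeeping once one recognizes that the weighting $\Psi$ under full pooling is a straight line joining the endpoints of $H$, so that checking $\Psi = \breve H$ reduces to the standard fact that a chord equals the lower convex envelope exactly when it lies below the function. The one point to articulate carefully is why $\breve H$ pins down the graph of $H$ at the two endpoints, which ensures the ``only if'' direction of the chord characterization.
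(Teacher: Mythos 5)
Your proof is correct and follows essentially the same route as the paper's: apply Theorem \ref{th:1}, note that the weighting under full pooling is the chord of $H$ from $(0,H(0))$ to $(1,H(1))$, observe that this chord equals $\breve H$ exactly when it lies weakly below $H$, and translate via $z=R(x)$ into the stated inequality. The only difference is that you spell out why $\breve H$ agrees with $H$ at the endpoints, a step the paper leaves implicit.
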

 This intuitive idea extends to local regions over which the receiver is more optimistic than the sender; those are natural intervals over which the sender should pool. 
\begin{proposition} Let $S$ be a cdf. 
If $R$ fosd $S$ on $[a,b]$, then there exists an optimal categorization where $[a,b)$ belongs to a pooling interval. 

Conversely, if $[a,b)$ belongs to a pooling interval, then $R$ fosd $S$ on that pooling interval.
\label{prop:poolint}
\end{proposition}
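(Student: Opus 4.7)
The plan is to reformulate the proposition as a statement about the lower convex envelope $\breve{H}$ of $H = S\circ R^{-1}$ appearing in Theorem~\ref{th:1}. Write $u = R(a)$ and $v = R(b)$, and let $L$ denote the chord of $H$ on $[u,v]$, the affine function through $(u, H(u))$ and $(v, H(v))$. By Theorem~\ref{th:1}, pooling intervals $[p,p')$ of optimal categorizations correspond exactly to maximal affine pieces of $\breve{H}$ over $[R(p), R(p')]$ whose endpoints meet the graph of $H$. Hence ``there exists an optimal categorization under which $[a,b)\subseteq [p,p')$'' is equivalent to ``$\breve{H}$ is affine on $[u,v]$''. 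This equivalence is what I will work with throughout.

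For the converse direction, if $[a,b)\subseteq [p,p')$ is a pool in an optimal $A^*$, then $\breve{H}$ coincides on $[R(p),R(p')]$ with the chord of $H$ joining $(R(p), S(p))$ and $(R(p'), S(p'))$, so the universal inequality $\breve{H}\leq H$ says this chord lies weakly below $H$ throughout. Translating back via $R^{-1}$, the inequality rearranges to $\frac{S(x) - S(p)}{S(p') - S(p)} \geq \frac{R(x) - R(p)}{R(p') - R(p)}$ for $x\in [p, p']$, which is exactly $R$ fosd $S$ on $[p, p')$.

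For the main direction, assume $R$ fosd $S$ on $[a,b]$, i.e., $H\geq L$ on $[u,v]$. I first note $\breve{H}\leq L$ on $[u,v]$: by convexity, $\breve{H}$ lies below its own chord from $(u,\breve{H}(u))$ to $(v,\breve{H}(v))$, and that chord lies below $L$ because $\breve{H}(u)\leq H(u) = L(u)$ and $\breve{H}(v)\leq H(v) = L(v)$. If additionally $\breve{H}(u) = H(u)$ and $\breve{H}(v) = H(v)$, then replacing $\breve{H}$ by $L$ on $[u,v]$ produces a function still below $H$ (using $L\leq H$ on $[u,v]$ by fosd) that is globally convex (the one-sided slope checks at $u$ and $v$ go through precisely because $\breve{H}\leq L$ inside $[u,v]$ forces $\breve{H}'_-(u)\leq L'$ and $\breve{H}'_+(v)\geq L'$); maximality of $\breve{H}$ then forces $\breve{H} = L$ on $[u,v]$, which is affine.

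Otherwise, suppose $\breve{H}(u) < H(u)$ (the case $\breve{H}(v) < H(v)$ is symmetric, and if both hold they must refer to the same maximal non-contact interval, which then covers $[u,v]$ by either argument). Then $u$ lies in a maximal open interval $(\alpha,\beta)$ on which $\breve{H} < H$; standard convex-envelope theory gives $\breve{H}$ affine on $(\alpha,\beta)$ with $\breve{H}(\alpha) = H(\alpha)$ and $\breve{H}(\beta) = H(\beta)$. I want $\beta \geq v$, which makes $\breve{H}$ affine on $[u,v]$. Suppose instead $\beta\in(u,v)$. Then $\breve{H}(\beta) = H(\beta)$ combined with $\breve{H}(\beta)\leq L(\beta)\leq H(\beta)$ pins $H(\beta) = L(\beta)$. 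Computing the common slope of the affine piece from the two points $(u,\breve{H}(u))$ (with $\breve{H}(u) < L(u)$) and $(\beta, L(\beta))$ yields a value strictly greater than the slope of $L$, and convexity propagates at least this slope to the right of $\beta$, yielding $\breve{H}(v) > L(v) = H(v)$, which contradicts $\breve{H}\leq H$. The main obstacle is precisely this slope-propagation step: ruling out an early termination of the non-contact interval of $\breve{H}$ containing $u$ (or, symmetrically, containing $v$) before it reaches the opposite endpoint.
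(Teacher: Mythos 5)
Your proposal is correct and follows essentially the same route as the paper: translate both statements via Theorem \ref{th:1} into the geometry of $H=S\circ R^{-1}$ and its lower convex envelope, note that fosd on $[a,b]$ is exactly the chord of $H$ over $[R(a),R(b)]$ lying weakly below $H$ (hence weakly above $\breve{H}$), and conclude that $\breve{H}$ must be affine on an interval containing $[R(a),R(b)]$ with contact at its endpoints, while the converse is the same chord-below-$H$ rearrangement the paper uses. The only difference is one of detail: where the paper simply asserts the existence of the enclosing affine piece of $\breve{H}$, you prove it carefully (the replace-by-the-chord maximality argument when both endpoints are contact points, and the slope-propagation contradiction when $\breve{H}(u)<H(u)$), which is a welcome filling-in rather than a different method.
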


\subsection{Full Separation and Separating Intervals} The same intuitive ideas do not carry over in symmetric fashion to separation, though Theorem \ref{th:1} can still be used. Unlike the case of full pooling, first-order stochastic dominance is not the relevant criterion. Full separation is optimal if and only if $S$ dominates $R$ in the \emph{likelihood ratio order}.\footnote{$S$ dominates $R$ in the likelihood ratio order if $\mathbb{P}_S\left[x\in X\right]\mathbb{P}_R\left[x\in Y\right]\leqslant \mathbb{P}_S\left[x\in Y\right]\mathbb{P}_R\left[x\in X\right]$ for all measurable sets $X$ and $Y$ with $X\leqslant Y$, where $X\leqslant Y$ means that $x\in X$ and $y\in Y$ imply $x\leqslant y$. Shaked and Shanthikumar (2007) show that this is equivalent to $S\circ R^{-1}$ being convex (see Equation (1.C.4). In Theorem 1.C.5, Shaked and Shanthikumar (2007) also show that $S$ dominating $R$ in the likelihood ratio order is equivalent to $S$ fosd $R$ over every subinterval of $[\la,\bar{a}]$.} Put another way, the converse of full pooling is not separation. It is true that if $S$ fosd $R$, she would gain from splitting $[\la,\bar{a}]$ into two or more pools. But for \emph{full} separation to be optimal, she must want to split \emph{every} pool. More formally:
\begin{proposition}
Full separation is optimal if and only if $H=S\circ R^{-1}$ is convex on $[0,1]$. If $S$ is a cdf, then this condition is equivalent to S dominating $R$ in the likelihood ratio order. 
\label{pr:sep}
\end{proposition}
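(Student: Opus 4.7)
The plan is to read the proposition directly off Theorem \ref{th:1} by observing what the characterization says when the candidate solution is the identity categorization.

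First, I would identify $A^*$ with full separation, i.e., $A^*(x) = x$ for every $x \in [\la, \bar a]$. Since this $A^*$ has no pooling intervals, the second branch of (\ref{eq:psi}) applies everywhere, giving $\Psi(x, A^*) = S(x)$ for all $x$. Composing with $R^{-1}$ yields
\[
\Psi(R^{-1}(z), A^*) = S(R^{-1}(z)) = H(z) \quad \text{for all } z \in [0,1].
\]
By Theorem \ref{th:1}, full separation is optimal if and only if $H(z) = \breve H(z)$ for all $z \in [0,1]$.

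Next I would use the basic properties of the lower convex envelope to show that $H = \breve H$ on $[0,1]$ if and only if $H$ is convex on $[0,1]$. For the ``only if'' direction, $\breve H$ is convex by construction, so if $H$ coincides with it, $H$ is convex. For the ``if'' direction, if $H$ is itself convex, then $H$ is a convex function weakly below $H$; since $\breve H$ is the pointwise supremum of all such functions (equivalently, the lower boundary of $\mathrm{Co}(\mathrm{Graph}(H))$), we get $\breve H \geqslant H$, and combined with the always-valid inequality $\breve H \leqslant H$, we obtain equality. This establishes the first assertion.

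For the second assertion, I would simply invoke the equivalence already cited in the footnote from Shaked and Shanthikumar (2007, Eq.~(1.C.4)): when $S$ is a cdf with the same support as $R$, $S$ dominates $R$ in the likelihood ratio order if and only if $S \circ R^{-1}$ is convex. Nothing in this argument is technically delicate; the only subtlety worth pointing out explicitly is that, unlike the pooling case, full separation cannot be deduced from any ``pointwise'' comparison such as fosd, because the envelope condition $H = \breve H$ is a global convexity requirement — which is precisely why the likelihood ratio order (equivalently, fosd on every subinterval) is the right criterion.
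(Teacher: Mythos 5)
Your proposal is correct and follows essentially the same route as the paper: the first assertion is read directly off Theorem \ref{th:1} (full separation gives $\Psi(\cdot,A^*)=S$, so optimality holds iff $H=\breve H$, i.e., iff $H$ is convex), and the second is the standard equivalence from Shaked and Shanthikumar already cited in the paper's footnote. The only cosmetic difference is that the paper spells out the intermediate step (convexity of $H$ as a slope condition, equivalent to $S$ fosd $R$ conditionally on every subinterval, hence likelihood-ratio dominance), whereas you invoke the convexity characterization (Eq.\ (1.C.4)) directly; both rest on the same cited results.
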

Unlike Proposition \ref{prop:poolint}, this assertion admits only a partial extension to sub-intervals: $S$ dominating $R$ in the likelihood order over $[a,b)$ is a necessary but not sufficient condition for $[a,b)$ to be nested in a separating interval. 
\begin{proposition} Let $S$ be a cdf.
If $[a, b)$ is a subset of some separating interval of qualities in any optimal categorization, then $S$ dominates $R$ in the likelihood ratio order on $[a,b)$.
\label{prop:sepint}
\end{proposition}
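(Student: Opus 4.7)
The plan is to push Theorem \ref{th:1} through to obtain local convexity of $H = S\circ R^{-1}$ on the quantile image of the separating interval, and then invoke the Shaked--Shanthikumar equivalence cited in the footnote to translate convexity into likelihood-ratio dominance.

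First, fix an optimal categorization $A^*$ and a separating interval $I\supseteq [a,b)$ of $A^*$. Since separating regions are exactly those where $A^*(x)=x$, the ``otherwise'' branch of definition (\ref{eq:psi}) applies on $I$, so $\Psi(x,A^*)=S(x)$ for every $x\in I$. Theorem \ref{th:1} gives $\Psi(R^{-1}(z),A^*)=\breve H(z)$ for every $z\in[0,1]$. Setting $z=R(x)$ for $x\in I$, we obtain
\[
H(z)\;=\;S(R^{-1}(z))\;=\;\Psi(R^{-1}(z),A^*)\;=\;\breve H(z)
\qquad \text{for every } z\in R(I).
\]
Thus $H$ and $\breve H$ coincide on the interval $R(I)$, and in particular on $[R(a),R(b))$. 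Using continuity of $R$, $S$ and $\breve H$, the equality extends to the closure $[R(a),R(b)]$.

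Second, I would observe that the restriction of a function to an interval on which it equals a convex function is itself convex. Since $\breve H$ is convex on $[0,1]$ by construction, and $H\equiv\breve H$ on $[R(a),R(b)]$, we conclude that $H$ is convex on $[R(a),R(b)]$. This is the same local convexity property of $S\circ R^{-1}$ that underpins Proposition \ref{pr:sep}; the only difference is that convexity is now asserted on a sub-interval of $[0,1]$ rather than on $[0,1]$ itself.

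Third, apply the Shaked--Shanthikumar equivalence from the footnote to Proposition \ref{pr:sep}: convexity of $S\circ R^{-1}$ on $[R(a),R(b)]$ is equivalent to $S$ first-order stochastically dominating $R$ on every subinterval of $[a,b]$, which is exactly likelihood-ratio dominance of $S$ over $R$ on $[a,b)$. That yields the claim.

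The only real obstacle is bookkeeping rather than substance: one must confirm that ``separating'' really does imply $A^*(x)=x$ identically on $I$ (so that $\Psi=S$ there, as opposed to some averaged value), and that the Shaked--Shanthikumar equivalence may be applied locally --- i.e., on a sub-interval of the support rather than on the whole support. Both are straightforward, because the cited equivalence is characterized pointwise via convexity of $H$, and convexity is an interval-local property.
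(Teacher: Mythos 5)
Your argument is correct and is essentially the paper's own proof: you use Theorem \ref{th:1} to force $H=\breve H$ on the percentile image of the separating interval (the paper states this via the percentile weighting $\Phi(\cdot,A)=\Psi(R^{-1}(\cdot),A)$, which coincides with your use of $\Psi$ and $z=R(x)$), deduce local convexity of $H$ from equality with the convex envelope, and translate that into likelihood-ratio dominance via the Shaked--Shanthikumar equivalence. No substantive differences.
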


\subsection{Separation Amidst Pooling} This subsection and the next turn to mixed solutions with both pooling and separation. We observe, first, that pooling and separation alternate under additional smoothness conditions on $S$ and $R$.\footnote{Without smoothness, it is possible to write an example with adjacent pooling intervals and no separating zone in between; see Panel A of  Figure \ref{app-fig:scallops} in the Supplementary Appendix for an illustration.}
\begin{proposition}
Suppose $R$ and $S$ are differentiable on $[\la, \ha]$, with strictly positive densities on $(\la,\bar{a})$. Then between two pooling intervals there is always a zone of separation.
\label{pr:alt}
\end{proposition}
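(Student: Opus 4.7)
The plan is to use Theorem~\ref{th:1} to recast the claim as a geometric statement about the lower convex envelope $\breve{H}$ of $H=S\circ R^{-1}$, and then derive a contradiction from the differentiability of $H$. First I would observe that, under the hypothesis, $R$ is continuously differentiable on $[\la,\bar{a}]$ with $R'>0$ on the interior, so $R^{-1}$ is differentiable on $(0,1)$; composed with the differentiable $S$, this makes $H$ differentiable on $(0,1)$. Next, by Theorem~\ref{th:1}, each pooling interval $[p,p')$ corresponds, under the change of variable $z=R(x)$, to a maximal linear segment of $\breve{H}$ on $[R(p),R(p')]$; at each endpoint of such a segment, $\breve{H}$ coincides with $H$, since the optimal weighting satisfies $\Psi(R^{-1}(R(p)),A^*)=S(p)=H(R(p))$ and similarly at $R(p')$.

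With this setup, I would argue by contradiction. Suppose two pooling intervals $[p_1,p_1')$ and $[p_2,p_2')$ are adjacent, i.e., $p_1'=p_2$. Setting $z^*=R(p_1')$, $\breve{H}$ then has two adjacent linear pieces meeting at $z^*$, with slopes
\[
m_1=\frac{H(z^*)-H(R(p_1))}{z^*-R(p_1)},\qquad m_2=\frac{H(R(p_2'))-H(z^*)}{R(p_2')-z^*}.
\]
Maximality of these linear pieces forces $m_1\neq m_2$, for otherwise they would merge into a single linear piece (and the two pools would collapse into one). Convexity of $\breve{H}$ then gives $m_1<m_2$, so $z^*$ is a genuine kink of $\breve{H}$. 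The final step exploits $\breve{H}(z^*)=H(z^*)$ together with $\breve{H}\le H$ globally: passing to the limit in one-sided difference quotients of $H-\breve{H}\ge 0$ centered at $z^*$ yields $H'(z^*)\le m_1$ from the left and $H'(z^*)\ge m_2$ from the right, which contradicts $m_1<m_2$. Hence $p_1'<p_2$, and the open interval $(p_1',p_2)$ is a separating zone of positive length between the two pools.

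The main obstacle is conceptual rather than computational: before the one-sided derivative argument lands, one must cleanly establish both (a) that the endpoints of a pooling interval are contact points of $\breve{H}$ with $H$, and (b) that two \emph{maximal} adjacent linear pieces of a convex function have strictly ordered slopes. Both follow from the variational definition of the lower convex envelope together with maximality, but they are the ingredients without which the differentiability hypothesis has no traction; once they are in place, the contradiction closes in a single line.
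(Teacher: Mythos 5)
Your core argument --- the only place where differentiability is actually used --- is correct: if two pooling intervals of an optimal categorization were adjacent, with junction percentile $z^*=R(p_1')$ and chord slopes $m_1<m_2$, then $\breve{H}$ would have a kink at a contact point, and since $H\geq\breve{H}$ with $H(z^*)=\breve{H}(z^*)$, the one-sided difference quotients force $H'(z^*)\leq m_1<m_2\leq H'(z^*)$, a contradiction. Your point (a) is also fine: at the left edge of a pooling percentile interval $\Phi(\cdot,A^*)$ equals $H$ by construction and equals $\breve{H}$ by Theorem \ref{th:1}, and the right edge follows by continuity. (The paper relegates its own proof to the Supplementary Appendix, so a line-by-line comparison is not possible, but this envelope-plus-kink route is clearly the natural one.)

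The genuine gap is the maximality step, i.e.\ the claim $m_1\neq m_2$. Theorem \ref{th:1} says only that $\Phi(\cdot,A^*)=\breve{H}$; it does \emph{not} say that each pooling interval of an arbitrary optimal categorization is a \emph{maximal} affine segment of $\breve{H}$. If the two adjacent chords are collinear, nothing ``collapses into one pool'': the two-pool categorization is a legitimate element of $\mathcal{A}_R$ (its adjacent pools automatically have distinct images, the two conditional means) and it remains optimal, because splitting an affine stretch of $\breve{H}$ at an interior contact point leaves $\Phi=\breve{H}$ unchanged. Such configurations occur under the stated hypotheses: take $R$ uniform on $[0,1]$ and $S(z)=z+\eps\, z^2(1-z)^2\left(z-\tfrac12\right)^2$ with $\eps$ small, so that $H=S\geq z=\breve{H}$ with contact exactly at $0,\tfrac12,1$; then pooling $[0,\tfrac12)$ and $[\tfrac12,1)$ is optimal and has no separating zone in between. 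So what you have actually proved is that adjacent pools with \emph{distinct} chord slopes are impossible. To close the argument you must either (i) read ``pooling intervals'' as the maximal affine segments of $\breve{H}$ (the canonical solution, where maximality holds by construction and your slope inequality is then legitimate), or (ii) add a merging step: whenever two adjacent pools have collinear chords, merging them changes neither $\Phi$ nor optimality, so that some optimal categorization alternates pooling and separation, which is the sense in which the proposition should be (and presumably is) understood. As written, maximality is asserted exactly where it is needed rather than established.
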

Viewing pools as discrete grades assigned to a continuum of potential qualities,  it is as if a cross-pool boundary gets  precise annotations, so that optimal categorization involves alternation over coarseness and revelation. A plus or minus annotation in student grades loosely conforms to this description, or letters of recommendation in which students at either extreme of a  grade are flagged. In finance,  rating agencies often report ``outlooks'' on assets, signaling that a rating is about to go up --- this asset is among the best in its category --- or down --- this asset is among the worst in its category. In some ratings, additional filters often provide lists which point out ``the best purchase under \$x," and so on. As noted, this observation does rely on smoothness in the sender and receiver  distributions.\footnote{Rayo (2013) also finds that alternating pooling and separating intervals is optimal. For more on the connections with Rayo's work, see Section \ref{sec:lit} and the Supplementary Appendix.}

\subsection{The Predominance of Pooling\label{sec:pre}}
Despite this alternating pattern,  there is a precise sense in which pooling is ``more widespread" than separation. Consider a problem with sender prior $S$ and receiver prior $R$. Now ``flip" the  priors so that $S' \equiv R$ and receiver prior $R' \equiv S$.  Note that if full separation is  optimal in the original problem, then full pooling is optimal in the flipped problem. For by Proposition \ref{pr:sep}, we know that $S$ dominates $R$ in the likelihood ratio order, which implies that $S$ fosd $R$. By Proposition \ref{pr:pool}, the condition for full pooling is met in the flipped problem. However, the converse is not true --- optimal full pooling in the original problem does \emph{not} necessarily imply full separation in the flipped problem. If $S$ and $R$ are such that full pooling is optimal in the original problem, then $R$ fosd $S$, but that does not necessarily imply that $R$ dominates $S$ in the likelihood ratio order, which is the requirement for full separation to be optimal in the flipped problem. 

This argument can be significantly extended. Say that a function $f$ on a real interval is \emph{nowhere locally affine} if there is no sub-interval $[z_1, z_2]$ with $f(z)$  affine on $[z_1, z_2]$.
\begin{proposition}
If $R$ and $S$ are increasing and continuous, every quality lies in some  pooling interval of an optimal categorization in either the original problem or the flipped problem. 

Furthermore, if $R$ and $S$ are differentiable on $[\la, \ha]$ with strictly positive densities on $(\la,\bar{a})$,  if $H=S\circ R^{-1}$ is neither globally convex nor concave, nor locally affine, then there is an open interval of qualities that lie in optimal pooling zones for both problems, but cannot be part of optimal separating zones for either problem. 
\label{pr:flip}
\end{proposition}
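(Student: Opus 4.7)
The plan is to recast the problem geometrically. With $H=S\circ R^{-1}$, set $K=\mathrm{Co}(\mathrm{Graph}(H))$; its lower boundary is $\breve{H}$ and its upper boundary is the upper concave envelope $\hat{H}$. The flipped problem is governed by $H^{-1}$, whose graph is the reflection of $\mathrm{Graph}(H)$ across the diagonal; under this reflection, the lower convex envelope of $H^{-1}$ (in the flipped coordinate $w=S(a)$) corresponds to $\hat{H}$ (in the original coordinate $z=R(a)$). Consequently, by Theorem \ref{th:1}, a quality $a$ is pooled in some optimal categorization of the flipped problem iff $z=R(a)$ lies in a linear piece of $\hat{H}$, just as $a$ is pooled in the original iff $z$ lies in a linear piece of $\breve{H}$. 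The first assertion thus reduces to showing that every $z\in[0,1]$ lies in a linear piece of $\breve{H}$ or of $\hat{H}$.

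I would prove this by a case split on the position of $(z,H(z))$ inside $K$. If $\breve{H}(z)<H(z)$, then $(z,\breve{H}(z))\in\partial K$ is not on $\mathrm{Graph}(H)$ and so cannot be an extreme point of $K$ (all extreme points of $\mathrm{Co}(\mathrm{Graph}(H))$ lie in $\mathrm{Graph}(H)$); it therefore lies in the relative interior of a line segment in $\partial K$, which must be on the lower boundary, so $\breve{H}$ is linear on an open neighborhood of $z$. The case $\hat{H}(z)>H(z)$ is symmetric. In the remaining ``pinch'' case $\breve{H}(z)=H(z)=\hat{H}(z)$, $K$ has zero vertical width at $z$, forcing the chord from $(z_1,H(z_1))$ to $(z_2,H(z_2))$ to pass through $(z,H(z))$ for every $z_1<z<z_2$ in $[0,1]$. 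Fixing $z_2$ and varying $z_1$ forces $H$ to be affine on $[0,z]$ with slope $(H(z_2)-H(z))/(z_2-z)$, and since this slope cannot depend on $z_2$, $H$ is globally affine; then $\breve{H}=\hat{H}=H$ and every $z$ lies trivially in a linear piece of both.

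For the second assertion, let $U=\{z:\breve{H}(z)<H(z)\}$ and $V=\{z:\hat{H}(z)>H(z)\}$; both are open. Non-convexity of $H$ gives $U\neq\emptyset$ and non-concavity gives $V\neq\emptyset$. Moreover, by Theorem \ref{th:1}, $\breve{H}(z)<H(z)$ forces $a=R^{-1}(z)$ to be pooled in every optimal categorization of the original problem (separation would require $\Psi=H$, but optimality pins $\Psi=\breve{H}$), and analogously $\hat{H}(z)>H(z)$ forces pooling in the flipped problem. So it suffices to show $U\cap V\neq\emptyset$. Suppose for contradiction $U\cap V=\emptyset$ and pick a maximal component $(p,q)\subset U$ with $p>0$. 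Then $(p,q)\subset V^c$, so $\hat{H}=H$ and hence $H$ is concave on $(p,q)$; together with non-local-affineness, $H$ lies strictly above the chord on the interior of $[p,q]$, which via the MVT and continuity of $H'$ (available because $R,S$ are $C^1$ with strictly positive densities) upgrades to the \emph{strict} inequality $H'(p)>m$, where $m=(H(q)-H(p))/(q-p)$. But $\breve{H}$ is the chord of slope $m$ on $[p,q]$, and the inequalities $\breve{H}\le H$ with $\breve{H}(p)=H(p)$ yield $\breve{H}'(p^-)\ge H'(p)$; convexity of $\breve{H}$ at $p$ gives $\breve{H}'(p^-)\le\breve{H}'(p^+)=m$; combining, $H'(p)\le m$, the contradiction. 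Edge configurations ($p=0$, or a gap between $U$ and $V$ occupied by the pinch set $U^c\cap V^c$) collapse because $U^c\cap V^c$ has empty interior under the hypotheses (else $H$ would be locally affine there); the same derivative argument then runs at the right endpoint $q$ of the leftmost $U$-component, using that $\breve{H}=H$ just to the right of $q$. Since $U\cap V$ is open and nonempty, it contains an open interval of $z$-values, which delivers an open interval of qualities on which both strict inequalities $\breve{H}<H$ and $\hat{H}>H$ hold.

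The main obstacle is the derivative-jump step: converting ``$H$ is concave on a $U$-component'' into the \emph{strict} chord inequality $H'(p)>m$ really requires non-local-affineness, because only that strict gap is incompatible with $\breve{H}$ being simultaneously convex at $p$ and pinned to $H(p)$ there. The supporting ingredients --- extreme points of $K$ lying in $\mathrm{Graph}(H)$, the reflection used to translate the flipped problem into statements about $\hat{H}$, and the empty-interior property of the pinch set --- are standard once that step is in place.
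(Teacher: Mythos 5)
Your proposal is correct, but it follows a genuinely different route from the paper's. For the first assertion, the paper argues interval-by-interval: any optimal separating interval $[a_1,a_2)$ of the original problem makes $H$ locally convex there (Proposition \ref{prop:sepint}), hence $H^{-1}$ locally concave on the corresponding $S$-percentile interval, which by Proposition \ref{prop:poolint} forces that interval to be pooled in the flipped problem; so anything not pooled in one problem is pooled in the other. You instead work pointwise with the convex hull of $\mathrm{Graph}(H)$: at each percentile either $\breve{H}<H$ (must pool in the original), $\hat{H}>H$ (must pool in the flipped problem, via the reflection identity $S'\circ R'^{-1}=H^{-1}$), or the pinch case forces $H$ to be globally affine — a clean trichotomy that actually delivers a slightly sharper ``must-pool in at least one problem'' statement. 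For the second assertion the divergence is larger: the paper picks an optimal separating interval, uses nowhere-local-affineness to get strict convexity, and uses differentiability of $H^{-1}$ at its endpoints to show the detachment region $\{H^{-1}\neq\breve{H^{-1}}\}$ strictly overlaps beyond the image of that interval, so the overlap is pooled in both problems; you instead define the two must-pool sets $U=\{\breve{H}<H\}$ and $V=\{\hat{H}>H\}$ and show $U\cap V\neq\emptyset$ by a contradiction at a contact point: on a $U$-component disjoint from $V$, $H$ is strictly concave (nowhere-affine is exactly where this is needed), giving $H'(p)>m$, while $\breve{H}\leq H$ pinned at $p$ plus convexity of $\breve{H}$ forces $H'(p)\leq m$. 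Your one-sided-derivative argument is sound (including the symmetric run at the right endpoint $q$ when $p=0$; note the only fact needed there is $\breve{H}(q)=H(q)$ and $\breve{H}\leq H$ to the right, not that $\breve{H}=H$ on a right neighborhood, and the residual case $U=(0,1)$ is excluded by $V\neq\emptyset$). A notable payoff of your route: it never presupposes that either problem possesses an optimal separating interval of positive length, whereas the paper's proof opens with ``pick any optimal separating interval'' — a hypothesis that can fail under the stated assumptions (e.g., $H(z)=z+\delta\sin(2\pi nz)$ with small $\delta$ is differentiable, nowhere locally affine, neither convex nor concave, yet both envelopes touch $H$ only at isolated points, so neither problem separates any interval); your argument covers such cases directly, at the cost of invoking a few standard convex-envelope facts (extreme points lie on the graph, affineness of the envelope on components of $\{\breve{H}<H\}$ with contact at their endpoints) that you correctly flag.
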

So every quality is pooled  in at least one of the two problems. 
When $H=S\circ R^{-1}$ is globally convex or concave, Propositions \ref{pr:pool} and \ref{pr:sep} tell us that we have full separation in one problems, and full pooling in the other --- in short, a tie between pooling and separation. But in all other cases --- modulo the technical restrictions of differentiability and absence of local flats\footnote{These  restrictions are for ease of exposition. Dropping them necessitates a  more qualified statement.} --- the first statement of the proposition continues to hold, and in addition there are qualities that are pooled in \emph{both} problems, and therefore separated in neither.

\section{Some Applications\label{sec:app}} 
Category design is a problem that arises in many settings; see the Introduction for some examples. One could, of course, take the non-common prior literally --- sender and receiver simply  agree to disagree.\footnote{For instance, Yildiz (2004), Van den Steen (2004, 2009, 2010, 2011) and Che and Kartik (2009)  consider  environments in which agents hold different priors that affect their incentives to acquire information.} Or perhaps the sender faces receivers with different priors (that aggregate to the sender's prior) because they have different experiences or information. She might be interested in a robust
 solution --- one that generates the highest return to her under the most pessimistic receiver prior in some exogenously given class.  We don't dwell on either of these obvious interpretations,  but emphasize other settings with \emph{common} beliefs that can be  mapped into reduced-form models with heterogeneous priors. 

For instance, while priors are common, the sender might derive a payoff from the state over and above receiver payments. Suppose that sender value under quality $a$ and receiver's posterior mean $A$ is $U(a,A) = \alpha (a)A + \beta (a)$, which is a \emph{linear} function of $A$, but potentially a nonlinear function of the state $a$. In that case, the sender picks $A\in\mathcal{A}_R$ so as to
\begin{equation*}
\text{ maximize } \int_\la^{\bar{a}}\alpha(a)A(a)dS(a).
\end{equation*}
But this state-dependent preference can be mapped into a new prior $\hat S$  to the sender, by letting $d\hat{S}(a)=\alpha (a)dS(a)$. In particular, even if sender and receiver start out with the same prior, their ``effective beliefs" will diverge, and this will affect the choice of pooling and separation. If, for instance, $\alpha$ is increasing,  the sender obtains more value by having $A(a)$ be larger precisely when $a$ is higher. It stands to reason that in that case, the sender chooses to reveal more information to the receiver, which leads to more separation. While we omit a formal proof, it is easy to see how this maps to our model: $\hat{S}$ --- the new sender ``prior'' just defined --- dominates $S$ in the likelihood ratio order. Consequently, the separating zones under $S$ will be a subset of the separating zones under $\hat{S}$. 

We can take a step further and think of even more primitive examples which map into the state-based-preference or divergent-prior  reduced form. Contributions such as Rayo (2013), Kartik, Lee and Suen (2016) and Kartik, Lee and Suen (2020) have also studied problems that map in this way on to ``distorted priors." 

\subsection{Some Examples that Generate State-Dependent Preferences\label{sec:exs}}
In what follows, we briefly discuss some applications that generate different reduced-form priors starting from a common-prior setting. For one more example, see the end of Section \ref{sec:nl}.

\emph{Retailing Intermediation.}  Rayo and Segal (2010) consider a retailing intermediary who is paid in some ratio to the value of a sale. The intermediary receives unit fees for different \emph{prospects}. Let $v$ be the buyer's value of a prospect and $\pi$ the unit fee paid out under that prospect. Say no two prospects have the same value, so each $v$ is associated with a single unit fee $\pi (v)$.  Let $R$ be the commonly held prior on prospects.\footnote{Rayo and Segal (2010) assume that there are only finitely many prospects, so that $R$ has finite support. In our exposition, we take $R$ to be a continuous distribution. Both our results can be extended to allow for mass points and their results can be extended to continuous distributions.} Then the intermediary's profit when she picks a categorization $A\in\mathcal{A}_R$ is
\begin{equation*}
\int_\la^{\bar{a}}\pi(v)A(v)dR(v)=\int_\la^{\bar{a}}A(v)dS(v)
\end{equation*}
where we set $dS(v)=\pi(v)dR(v)$. Note that $d [S\circ R^{-1}](v)=dS(v)/dR(v)=\pi(v)$, so that regions where $\pi$ is increasing are equivalent to regions where $S\circ R^{-1}$ is convex, and regions where $\pi$ is decreasing are equivalent to regions where $S\circ R^{-1}$ is concave.

We can therefore use our method to solve for the optimal monotonic\footnote{That is, the categorization classifies revealed  receiver values into intervals or singletons.} categorization in Rayo and Segal's environment. They characterize the optimal signal without any restriction to monotonic categorizations. In Section \ref{sec:mono}, we compare the optimal monotonic categorization found in our paper to the optimal signal characterized in their paper. 

\emph{Affirmative Action.} A public authority --- the sender --- designs the disclosure of the results of an admission test to a university. Suppose that the university admits a student with probability proportional to the perceived quality of that student. The public authority seeks to maximize weighted welfare, placing different weights on individuals based on the social group they belong to --- the idea being that the sender places higher weights on social groups that have been historically underprivileged. Let $w_G$ the weight placed on group $G$, and $F_G$ the distribution of test scores for students in that group. The relevant ``prior'' to the sender is then $S=\sum_{G}w_GF_G$. If the receiver --- the university --- only cares about the underlying test score, the appropriate receiver ``prior'' is $R=\sum_{G}F_G$.

\emph{Peer Effects.} 
A school --- the sender --- wants to design different educational ``tracks," such as honors programs, when there are peer effects across students. Specifically, the school seeks track-specific thresholds, with students admitted to all tracks for which their abilities exceed the track thresholds. Let $R$ on $[\la,\bar{a}]$ be the ability distribution among students at the start of the year. Suppose that a student with initial ability $a$ is placed in a track where the average initial ability of the students is $A(a)$. Then suppose that the student's end-of-year ability is given by $\hat{a}(a)=\lambda_1(a)+\lambda_2(a)A(a)$. The function $\lambda_1$ describes how ability ``naturally'' evolves through the year, while $\lambda_2$ measures the influence of peer effects on a student of initial ability $a$. Say $\lambda_1$ is  increasing and $\lambda_2$ is always weakly positive. The school's objective is to maximize end-of-year student abilities. They will then pick a categorization so as to maximize
\begin{equation*}
\int_\la^{\bar{a}}\left[\lambda_1(a)+\lambda_2(a)A(a)\right]dR(a).
\end{equation*}
This is equivalent to maximizing $\int_\la^{\bar{a}}A(a)dS(a)$, where $dS(a)\equiv \lambda_2(a)dR(a)$.

\subsection{Moral Hazard and Educational Grades\label{sec:educ}} In the following extended application, we show how moral hazard can both justify monotonicity and create distorted priors.

\emph{The Setting.} A private university --- the sender --- designs a grading system to maximize tuition revenues. This is part of a larger problem in which it might choose an admissions cutoff, but we consider the ``second stage" where the school has already chosen its student body. Normalize admitted students to a unit measure, with abilities $a$ distributed according to $R$ on $[\la, {\bar a}]$. Upon entering, a student fully learns $a$, but before that she has only some prior, represented as a distribution over $[\la, {\bar a}]$. For instance, she may be close to a degenerate distribution on $[\la, {\bar a}]$ (full self-knowledge), or have the population belief $R$, or some third belief. Among these possibilities, we assume that there is  some continuous prior $F_0$, with $F_0(\la)=0$, which is the lowest according to first-order stochastic dominance among all priors. 

A market (our receiver) with shared prior $R$ infers student abilities from a \emph{learning level} or grade $\ell $ that the school chooses to make observable. Say a unit of ability is worth a dollar, and each unit of learning worth $\lambda $ dollars, and so the market pays
\begin{equation}
\mathbb{E} (a|\ell )+\lambda \ell 
\label{eq:gp}
\end{equation}
to a student with learning $\ell$, where the conditional expectation is determined by the prior $R$ as well as equilibrium strategies. Notice that $\ell $ has both intrinsic and signaling value.

\emph{Incentive-Compatible Learning.} The school chooses a  compact set of certified learning levels $L$. Learning nothing is always an option, so $0 \in L$. A student of ability $a$ chooses $\ell \in L$ by exerting effort at cost $c(a)\ell $, where $c'(a) < 0$. We assume the following condition:

\textbf{[C]} $c({\bar a}) > \lambda $, so that rewards to learning alone do not motivate any student.   

Every nonzero $\ell \in L$ will presumably be occupied by some ability type. A student \emph{could} choose $\ell \not \in L$, but the market observes only $\ell '= \max \{ \ell '' \in L | \ell'' \leqslant\ell \}$, so there is no point in doing so. In the spirit of direct mechanisms, suppose that the school ``assigns" learning $\ell (a) \in L$ to each ability type $a$. The value to an obeying student of ability $a$ is
\[
\mathbb{E}_R (a'|\ell (a)) +\lambda \ell(a) -c(a)\ell(a),
\]
where $\mathbb{E}_R (a'|\ell (a))$ is the expectation of ability $a'$ given that $\ell (a)$ is observed, and given that all students follow $\ell$. A standard single-crossing argument yields:
\begin{obs}
If $\ell (a)$ and $\ell (a')$ are optimal for $a$ and $a'$, with $a > a'$, then $\ell (a) \geq \ell (a')$.
\label{obs:sc}
\end{obs}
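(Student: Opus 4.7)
The plan is to run the standard revealed-preference single-crossing argument, exploiting the fact that the conditional-expectation term $\mathbb{E}_R(a''|\ell)$ and the learning value $\lambda\ell$ are both common across types, so that only the cost term $c(a)\ell$ carries the type dependence. Because $c$ is strictly decreasing, higher ability translates into a lower marginal cost of learning, and this is exactly the supermodularity needed for monotone selection.

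Concretely, I would write down the two incentive constraints. Since $\ell(a)$ is optimal for type $a$ and $\ell(a')\in L$ is available to $a$,
\begin{equation*}
\mathbb{E}_R(a''|\ell(a)) + \lambda\ell(a) - c(a)\ell(a) \;\geq\; \mathbb{E}_R(a''|\ell(a')) + \lambda\ell(a') - c(a)\ell(a'),
\end{equation*}
and symmetrically for type $a'$,
\begin{equation*}
\mathbb{E}_R(a''|\ell(a')) + \lambda\ell(a') - c(a')\ell(a') \;\geq\; \mathbb{E}_R(a''|\ell(a)) + \lambda\ell(a) - c(a')\ell(a).
\end{equation*}

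Next I would add these two inequalities. The $\mathbb{E}_R(a''|\ell(\cdot))$ terms cancel, and so do the $\lambda\ell(\cdot)$ terms, leaving
\begin{equation*}
-c(a)\ell(a) - c(a')\ell(a') \;\geq\; -c(a)\ell(a') - c(a')\ell(a),
\end{equation*}
which rearranges to $[c(a') - c(a)]\,[\ell(a) - \ell(a')] \geq 0$. Because $a > a'$ and $c'<0$, we have $c(a') - c(a) > 0$, so $\ell(a) \geq \ell(a')$, as claimed.

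There is really no main obstacle here — the argument is the textbook sum-of-IC-constraints calculation. The only thing to be careful about is to verify that both $\ell(a)$ and $\ell(a')$ are indeed feasible for both types (true because $L$ is the same menu for everyone, and each $\ell(\cdot)$ lies in $L$ by construction), and that the market's pricing rule $\mathbb{E}_R(a''|\ell)+\lambda\ell$ in~\eqref{eq:gp} is type-independent, so that it drops out when the IC constraints are summed. Both points are immediate from the setup.
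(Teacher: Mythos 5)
Your proof is correct and is exactly the ``standard single-crossing argument'' the paper invokes without spelling out: sum the two revealed-preference inequalities, note that the market's type-independent payment $\mathbb{E}_R(a''|\ell)+\lambda\ell$ cancels, and use $c'<0$ to sign $[c(a')-c(a)][\ell(a)-\ell(a')]\geq 0$. Nothing further is needed.
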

So incentive-compatibility restricts the school to a monotone categorization of abilities. $\ell$ could have separating intervals on which it is strictly increasing, and pooling intervals on which it is constant. These obviously correspond to a particular categorization $A_{\ell }$.
Incentive compatibility additionally implies that for every $a,a' \in [\la,\bar{a}]$,
\begin{equation}
A_{\ell}(a)+\lambda \ell (a) -c(a)\ell (a) \geq \max\left\{ A_{\ell}(a')+\lambda \ell(a') -c(a)\ell(a'), {\la} \right\}.
\label{eq:aic}
\end{equation}
The second constraint on the right hand side of (\ref{eq:aic}) is needed in case $\ell (\la) > 0$. Then the zero-learning choice is off-path, and suitable beliefs will be needed to guarantee incentive-compatibility. We presume that in such cases, the observation of 0 is associated with the belief that the student has the lowest ability $\la $.\footnote{This implements the best equilibrium from the perspective of a tuition-maximizing school.} The following observation tightly links incentive compatible learning functions $\ell$ to their corresponding categorizations $A_{\ell }$.
\begin{obs}
\textnormal{(i)} A learning function $\ell (a)$ is incentive-compatible if and only if  it is nondecreasing, if $\ell(\la)\in\left[0,\frac{A_\ell(\la)-\la}{c(\la)-\lambda}\right]$, where $A_{\ell}$ is the corresponding categorization, if $\ell $ is differentiable on any separating interval with 
\begin{equation}
\ell '(a) = \frac{1}{c(a) - \lambda},
\label{eq:ic1}
\end{equation}
and if at any threshold $t$ dividing two adjacent intervals,
\begin{equation}
\ell(t) = \up{\ell}(t) +\frac{A_{\ell}(t) - \up{A}_{\ell}(t)}{c(t) - \lambda },
\label{eq:ic2}
\end{equation}
where $\uparrow$ stands for left-hand limit.

 \textnormal{(ii)} Likewise, for every $A \in \mathcal{A}_R$ pick $\ell $ with $\ell(\la)\in\left[0,\frac{A(\la)-\la}{c(\la)-\lambda}\right]$, satisfying \textnormal{(\ref{eq:ic1})} and \textnormal{(\ref{eq:ic2})}. That describes all incentive-compatible  $\ell$ such that $A_{\ell} = A$.
\label{obs:ic}
\end{obs}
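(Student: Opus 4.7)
The plan is to recognize the claim as a (slightly non-standard) Mirrlees-style envelope characterization: cast indirect utility $V(a):=A_\ell(a)+(\lambda-c(a))\ell(a)$, derive local necessary conditions from (\ref{eq:aic}), and then run a single-crossing/envelope argument in the other direction. Part (ii) will fall out of part (i) once we observe that the local conditions uniquely propagate $\ell$ forward from $\ell(\la)$.

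For necessity, monotonicity of $\ell$ is already contained in Observation \ref{obs:sc}. Within a pool, $A_\ell$ is constant, and if $\ell$ varied among two types $a<a'$ in the pool, the higher-$\ell$ type would strictly prefer the other's choice (same wage, strictly lower effort cost), so $\ell$ must be constant on pools. On a separating interval, $A_\ell(a)=a$ and $\ell$ is strictly increasing; substituting this into (\ref{eq:aic}) yields the standard signaling first-order condition, and differentiating the equilibrium identity $A(\ell(a))=a$ then combining with the FOC $A'(\ell)=c(a)-\lambda$ yields $\ell'(a)=1/(c(a)-\lambda)$, which is strictly positive by condition \textbf{[C]} together with $c'<0$. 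At a threshold $t$, the IC between the type at $t$ and the adjacent-interval menu item forces indifference: letting $a\to t$ and $a'\to t^-$ in (\ref{eq:aic}) gives $A_\ell(t)+(\lambda-c(t))\ell(t)=\up{A}_\ell(t)+(\lambda-c(t))\up{\ell}(t)$, which rearranges to (\ref{eq:ic2}). Finally, the upper bound on $\ell(\la)$ comes from evaluating the second entry of the max in (\ref{eq:aic}) at $a=\la$.

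For sufficiency, I would show that $V$ is continuous and nondecreasing and then use the envelope representation to verify global IC. On separating intervals, differentiation gives $V'(a)=1+(\lambda-c(a))\ell'(a)-c'(a)\ell(a)=-c'(a)\ell(a)\geq 0$ by (\ref{eq:ic1}); on pools both $A_\ell$ and $\ell$ are constant so $V$ is constant there; and the jump condition (\ref{eq:ic2}) precisely cancels the simultaneous jumps of $A_\ell$ and $(\lambda-c(t))\ell$ at every threshold, so $V$ is continuous on $[\la,\bar a]$ and satisfies $V(a)-V(a')=-\int_{a'}^{a}c'(s)\ell(s)\,ds$. Writing $A_\ell(a')=V(a')-(\lambda-c(a'))\ell(a')$, a short rearrangement gives, for any $a>a'$,
\begin{equation*}
V(a)-\big[A_\ell(a')+(\lambda-c(a))\ell(a')\big]=\int_{a'}^{a}c'(s)\big(\ell(a')-\ell(s)\big)\,ds \;\geq\; 0,
\end{equation*}
because $c'<0$ and $\ell$ is nondecreasing; the case $a<a'$ is symmetric, and $V(\la)\geq\la$ (from the upper bound on $\ell(\la)$) combined with monotonicity of $V$ delivers the outside-option constraint in (\ref{eq:aic}).

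Part (ii) then follows by noting that, given any $A\in\mathcal{A}_R$ and any admissible $\ell(\la)$, the value of $\ell$ on each pool is forced (constant, pinned down by the jump condition at the pool's left endpoint) and the value on each separating interval is pinned down by the ODE (\ref{eq:ic1}) with the initial condition inherited from the preceding threshold via (\ref{eq:ic2}); induction over the countably many intervals yields a unique $\ell$ for each choice of $\ell(\la)$, and part (i) shows each such $\ell$ is IC. The main obstacle is the sufficiency step: the envelope identity for $V$ and the continuity of $V$ across thresholds must both be handled uniformly over a (possibly countable) alternation of pools and separating regions, and the jump rule (\ref{eq:ic2}) is exactly the algebraic device that makes this stitching work.
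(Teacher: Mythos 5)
Your overall route matches the paper's: monotonicity from Observation \ref{obs:sc}, constancy of $\ell$ on pools, local conditions extracted from (\ref{eq:aic}) at separating points and thresholds, the bound on $\ell(\la)$ from the outside option evaluated at $a=\la$, and for part (ii) forward propagation of $\ell$ by the ODE (\ref{eq:ic1}) and the jump rule (\ref{eq:ic2}), with uniqueness for each admissible $\ell(\la)$. Your sufficiency step, via the indirect utility $V(a)=A_\ell(a)+(\lambda-c(a))\ell(a)$ and the identity $V(a)-V(a')=-\int_{a'}^{a}c'(s)\ell(s)\,ds$, is actually spelled out in more detail than in the paper, which essentially asserts sufficiency once the local conditions are in hand; that part of your write-up is a genuine plus.

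Two repairs are needed. First, in the necessity direction you derive (\ref{eq:ic1}) by ``differentiating the equilibrium identity'' and combining with a first-order condition; this presupposes that $\ell$ is differentiable, which is precisely part of the claim to be proved. The paper avoids this by applying (\ref{eq:aic}) to a pair $a,a'$ in a separating interval in both directions, which sandwiches the difference quotient, $\frac{1}{c(a')-\lambda}\geq\frac{\ell(a)-\ell(a')}{a-a'}\geq\frac{1}{c(a)-\lambda}$, and letting $a'\to a$ delivers differentiability and the formula simultaneously; replace your FOC heuristic with this squeeze. Second, your statement that $V$ is constant on pooling intervals is wrong: there $A_\ell$ and $\ell$ are constant but $c(a)$ is not, so $V'(a)=-c'(a)\ell(a)\geq 0$ on pools as well. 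Fortunately this is the same expression as on separating intervals, so the integral identity you actually use is correct, and the remainder of your verification --- including $V(\la)\geq\la$ from the bound on $\ell(\la)$ and monotonicity of $V$ to handle the outside option in (\ref{eq:aic}) --- goes through once that justification is corrected.
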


\emph{Tuition and School Payoffs.} The school sets a single  tuition level.  Type $F_0$ has the lowest willingness to pay, so  the school must maximize the expected payoff of this type, before fees.
That is, the school chooses  incentive compatible $\ell $ to maximize 
\begin{equation}
\int_{\la }^{\bar a}[ A_{\ell}(a) + \{ \lambda- \sigma c(a) \}  \ell (a) ]dF_0(a),
\label{eq:7}
\end{equation}
where $\s \in [0, 1]$ is the extent to which parents internalize effort costs at the ex-ante stage. The extent of this internalization will determine not just the level of the tuition (which is a relatively minor consideration, at least for the analysis), but also the school's ``attitude" towards the intrinsic value of learning; more on this immediately below. We now link the school problem to our more abstract setting, thereby permitting a full solution of it. 
 
\emph{Solution to the School Problem.} 
Consider two cases. If $\lambda >\s \int_{\la}^{\bar a}c(a)dF_0(a)$, then ``$F_0$-parents" value, on average, intrinsic learning relative to cost. It is obvious that no matter what categorization $A$ the school seeks to implement, its associated learning function must have the highest possible starting point. That is, recalling part (ii) of Observation \ref{obs:ic}, initial learning $\ell (\la)$ must be set equal to the upper bound  $\frac{A(\la)-\la}{c(\la)-\lambda}$ for any choice of $A \in A_{\mathcal{R}}$.

Otherwise, $\lambda \leqslant \s \int_{\la}^{\bar a}c(a)dF_0(a)$. Now learning isn't intrinsically valued by $F_0$-parents, so the school optimally sets  $\ell (\la) = 0$. That motivates the definition: for any $A \in A_{\mathcal{R}}$: 

\begin{align}
\ell^*(A) = \begin{cases}
\displaystyle{\frac{A (\la)-\la}{c(\la)-\lambda}} \quad & \text{ if  } \quad \displaystyle{\lambda >\s \int_{\la}^{\bar a}c(a)dF_0(a)} \\
\qquad 0 & \text{ if } \displaystyle{\lambda \leqslant \s \int_{\la}^{\bar a}c(a)dF_0(a)}
\end{cases}
\label{eq:defn}
\end{align}
\begin{proposition} Assume Condition C. For every $A \in {\mathcal A}_R$, pick unique $\ell$ as described in Observation \ref{obs:ic}, with $\ell(\la)=\ell^*(A)$ as defined in \textnormal{(\ref{eq:defn}). Then school payoff is given by}
\begin{equation}
\int_{\la}^{\bar{a}}\left[A(a)+\{ \lambda-\sigma c(a)\}\ell(a)\right] dF_0(a)=\int_{\la}^{\bar{a}}A(a)dS(a){+K}
\label{eq:obj}
\end{equation}
where {$K$ is a constant and}
\begin{align}
 \label{eq:G}
S(a) & =F_0(a) + \int_a^{\bar{a}}\frac{\sigma c(x)-\lambda}{c(a)-\lambda}dF_0(x)\text{ if }a\in(\la,\bar{a}], \text{ and}\\
S(\la) &= \min\left\{0, \int_\la^{\bar{a}}\frac{\sigma c(x)-\lambda}{c(a)-\lambda}dF_0(x)\right\}   .
\nonumber
\end{align}
The function $S$ is left-continuous and has bounded variation with at most one discontinuity. Also, $S(\la)$ is finite and $S(\bar a ) =1$. This, along with the fact that $A$ is right-continuous with at most countably many discontinuities guarantees that the Stieltjes integral in \textnormal{(\ref{eq:maxa})} is well-defined, and that all the assumptions of the baseline model are satisfied. 

The school problem is solved by finding a solution $A^*$ to the optimal categorization problem with $R$ as the receiver's distribution and $S$, defined in \textnormal{(\ref{eq:G})}, as the sender's distribution. The optimal learning function is the unique $\ell$ associated with $A^*$ with $\ell(\la)=\ell^*(A^*)$.
\label{pr:obj*}
\end{proposition}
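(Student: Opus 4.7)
The plan is to express the school's payoff for any incentive-compatible $\ell$ (corresponding to categorization $A$ via Observation \ref{obs:ic}) as $\int_{\la}^{\bar a} A(a)\,dS(a)+K$ by a chain of Stieltjes integrations by parts. I would split the objective into $\int A\,dF_0+\int[\lambda-\sigma c(a)]\ell(a)\,dF_0(a)$ and exploit the key identity $d\ell(a)=dA(a)/[c(a)-\lambda]$ in the Stieltjes sense, which Observation \ref{obs:ic} delivers: on separating intervals $A(a)=a$ and (\ref{eq:ic1}) gives $d\ell=da/[c(a)-\lambda]$; on pooling intervals both $\ell$ and $A$ are constant so $d\ell=0=dA$; and at a pool-edge threshold $t$, (\ref{eq:ic2}) matches the jump of $\ell$ to the jump of $A$ divided by $c(t)-\lambda$.

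Set $G(a):=\int_a^{\bar a}[\lambda-\sigma c(x)]\,dF_0(x)$ and $H(a):=G(a)/[c(a)-\lambda]$, so $G(\bar a)=H(\bar a)=0$. One integration by parts rewrites $\int[\lambda-\sigma c(a)]\ell(a)\,dF_0(a)$ as $\ell(\la)G(\la)+\int G(a)\,d\ell(a)=\ell(\la)G(\la)+\int H(a)\,dA(a)$, using the Stieltjes identity above. A second integration by parts, using $H(\bar a)=0$, turns $\int H(a)\,dA(a)$ into $-H(\la)A(\la)-\int A(a)\,dH(a)$. Adding back $\int A\,dF_0$, the payoff becomes $\int_{\la}^{\bar a} A(a)\,d[F_0-H](a)+\ell(\la)G(\la)-H(\la)A(\la)$. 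Since $-H(a)=\int_a^{\bar a}[\sigma c(x)-\lambda]/[c(a)-\lambda]\,dF_0(x)$, the measure $d[F_0-H]$ on $(\la,\bar a]$ is precisely $dS$ as defined in (\ref{eq:G}).

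It remains to absorb the boundary term $\ell(\la)G(\la)-H(\la)A(\la)$ using the definition (\ref{eq:defn}) of $\ell^*(A)$ together with the prescription for $S(\la)$. In Case 1 ($\lambda>\sigma\int c\,dF_0$, so $G(\la)>0$, and $\ell(\la)=[A(\la)-\la]/[c(\la)-\lambda]$), the coefficients of $A(\la)$ cancel and leave the $A$-independent constant $K=-\la G(\la)/[c(\la)-\lambda]$; here $-H(\la)<0$, so $S(\la)=\min\{0,-H(\la)\}=-H(\la)=S(\la^+)$ and $S$ has no jump at $\la$. In Case 2 ($\ell(\la)=0$, $G(\la)\leqslant 0$), the boundary term is $-H(\la)A(\la)\geqslant 0$, and $S(\la)=0$ while $S(\la^+)=-H(\la)$, so $S$ carries an atom of mass $-H(\la)$ at $\la$. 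Under the Stieltjes convention pairing right-continuous $A$ with left-continuous $S$, this atom contributes exactly $A(\la)\cdot[-H(\la)]$ to $\int_{\la}^{\bar a} A\,dS$, absorbing the boundary term with $K=0$. Either way the payoff equals $\int A\,dS+K$.

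The claimed regularity of $S$ is then direct: $S(\bar a)=1$ because $H(\bar a)=0$; $S(\la)$ is finite by Condition C (which keeps $c(a)-\lambda$ bounded away from zero); left-continuity and bounded variation follow from the construction and from $F_0$ and the smooth function $H$ having bounded variation; the only possible discontinuity is the atom at $\la$ in Case 2. Since $K$ is $A$-independent, the school's optimization reduces to choosing $A\in\mathcal{A}_R$ to maximize $\int A\,dS$, to which Theorem \ref{th:1} applies directly, and the optimal $\ell$ is recovered uniquely from $A^*$ via Observation \ref{obs:ic}. The main obstacle is the Stieltjes bookkeeping: rigorously justifying $d\ell(a)=dA(a)/[c(a)-\lambda]$ across separating intervals and threshold jumps simultaneously, and verifying that the left-endpoint atom convention in the Stieltjes integral exactly matches the Case-2 boundary term, which is precisely what the $\min$ in the definition of $S(\la)$ is designed to encode.
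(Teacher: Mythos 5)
Your proposal is correct and follows essentially the same route as the paper's proof: the identity $d\ell(a)=dA(a)/[c(a)-\lambda]$ is exactly the paper's Radon--Nikodym lemma (Lemma \ref{radon}), your two integrations by parts mirror its repeated use of Lemma \ref{parts}, and your case split on $\lambda \gtrless \sigma\int c\,dF_0$ with the atom of $S$ at $\la$ absorbing the boundary term in Case 2 (and $K=-\la G(\la)/[c(\la)-\lambda]$ in Case 1) matches the paper's two cases. The only differences are organizational (you identify $dS=d[F_0-H]$ directly rather than passing through the $A(\la)[1-S(\la)]+\int[1-S]\,dA$ form), and your bounded-variation claim for $S$, asserted via the product structure of $H$, is established in the paper by an explicit decomposition into monotone pieces.
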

Proposition \ref{pr:obj*} fully removes $\ell $ from the analysis, as well as its attendant moral-hazard implications, and converts a more complex model into our simpler categorization model. In so doing, it reveals three reasons for the ``induced prior" $S$ to be different from $R$. First, the sender may, in effect, be delegated to work on behalf of someone with a distinct prior. Here, this is the student or parent with lowest belief $F_0$. The school is pushed to it in the interests of tuition revenue. The second stems from ancillary incentive constraints that might be involved in revealing quality; here, these have to do with learning. Third, the actions taken to signal quality (school performance) may have direct payoff effects. All three enter equation (\ref{eq:G}). These considerations can additionally cause $S$ to depart from a cdf. 

At a somewhat more technical level, a continuous prior can even acquire a discontinuity as a result of incentive constraints. Remember that parental priors are continuous, with $F_0(\la) = 0$. In the first of the two cases where learning is intrinsically valued, $S(\la)$ as defined in (\ref{eq:G}) is negative --- it isn't a cdf any more --- but the entire induced function $S$ is easily seen to be continuous. If, on the other hand, learning is not intrinsically valued,  $S(\la)$ is set equal to $0$, generating a discontinuity in $S$, while $\lim_{a \downarrow \la} S(a) > 0$. In any case, our techniques apply to either case, but the latter will generate an initial pool of zero learning --- perhaps defining a ``party school."

\emph{Pooling.} An intriguing question which deserves a more detailed exploration is whether grade pooling is a pervasive property of an educational system.
Theorem \ref{th:1} and Proposition \ref{pr:pool} can be applied to shed some light on this issue. Proposition \ref{pr:fullp} below formalizes the following claims: a school will want to pool all ability types when (a)  the market places a low relative value on learning, as in Lizzeri (1999) where learning has no value;   (b) if students fully internalize their cost of learning ex-ante,  and (c) if the lowest belief student is certain ex-ante  that she is the lowest ability type.
\begin{proposition} A sufficient condition for full pooling to be optimal is:
\begin{equation}
\int_a^{\bar{a}}[\sigma c(x)-\lambda ]dF_0(x)\geqslant 0 \hskip10pt \text{ for every }a\in[\la,\bar{a}]
\label{eq:po}
\end{equation}

\textnormal{(i)} \textnormal{(\ref{eq:po})} is satisfied when $\lambda=0$. Moreover, if there is $\lambda>0$ such that it is satisfied, then it is also satisfied for any $\lambda'<\lambda$.

\textnormal{(ii)} \textnormal{(\ref{eq:po})} is satisfied when $\sigma=1$. Moreover, if there is $\sigma>0$ such that it is satisfied, then it is also satisfied for any $\sigma'>\sigma$.

\textnormal{(iii)} \textnormal{(\ref{eq:po})} is satisfied when $F_0$ is degenerate at $a=\la$. Moreover, if there is $F_0$ such that it is satisfied, then it is also satisfied for any $F_0'$ such that $F_0$ first order stochastically improves over $F_0'$.
\label{pr:fullp}
\end{proposition}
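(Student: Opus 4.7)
The plan is to invoke Proposition \ref{pr:pool} applied to the induced sender distribution $S$ built in Proposition \ref{pr:obj*}. Condition (\ref{eq:po}) will imply both $S(\la)=0$ and $S(a)\geq F_0(a)$ pointwise; combined with the fact that $F_0$ is the fosd-minimal prior among the students (and $R$, the population belief, is among the possible priors), this yields $S\geq R$ pointwise, i.e., $S$ fosd $R$. Then the Proposition \ref{pr:pool} condition $S(x)-S(\la)\geq [1-S(\la)]R(x)$ collapses (thanks to $S(\la)=0$) to $S(x)\geq R(x)$, giving optimal full pooling.

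\textbf{Main argument.} For $a\in(\la,\bar a]$, write
\[
S(a)=F_0(a)+\frac{1}{c(a)-\lambda}\int_a^{\bar a}[\sigma c(x)-\lambda]\,dF_0(x).
\]
Condition C together with $c'<0$ gives $c(a)\geq c(\bar a)>\lambda$, so the denominator is strictly positive. Under (\ref{eq:po}) the integral is nonnegative, so $S(a)\geq F_0(a)$. Applying (\ref{eq:po}) at $a=\la$ makes the bracketed integral in the definition of $S(\la)$ nonnegative, so $S(\la)=\min\{0,\cdot\}=0$. Since $F_0$ is fosd-minimal among the student priors, $F_0(x)\geq R(x)$ pointwise, and therefore $S(x)\geq R(x)$, which is the reduced full-pooling condition of Proposition \ref{pr:pool}.

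\textbf{Parts (i)-(iii).} All three claims reduce to pointwise inspection of the integrand $h(x)\equiv \sigma c(x)-\lambda$, which is nonincreasing in $x$. For (i), $\lambda=0$ gives $h=\sigma c\geq 0$ so (\ref{eq:po}) is immediate; replacing $\lambda$ by $\lambda'<\lambda$ raises $h$ pointwise and preserves nonnegativity of the integral. For (ii), $\sigma=1$ gives $h=c-\lambda\geq c(\bar a)-\lambda>0$ by Condition C; replacing $\sigma$ by $\sigma'>\sigma$ raises $h$ pointwise since $c>0$. For (iii), a degenerate $F_0$ at $\la$ makes the tail integral $\int_a^{\bar a} h\,dF_0$ vanish for every $a>\la$, and the edge case $a=\la$ is absorbed by the $\min$ in the definition of $S(\la)$, so (\ref{eq:po}) holds. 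The fosd monotonicity uses that $h$ nonincreasing implies $\{x:h(x)\geq 0\}$ is an initial sub-interval $[\la,x^*]$, and then (\ref{eq:po}) is equivalent to $F_0$ being supported on $[\la,x^*]$ (the $a=x^*$ case forces $F_0(x^*)=1$, and the $a<x^*$ cases follow automatically); if $F_0$ fosd dominates $F_0'$, then $F_0'(x^*)\geq F_0(x^*)=1$, so $F_0'$ also concentrates on $[\la,x^*]$ and (\ref{eq:po}) persists.

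\textbf{Main obstacle.} I expect the trickiest step to be the fosd monotonicity of part (iii), since it is not a pointwise comparison in the integrand but a global comparison of measures. The clean route is the ``support on $\{h\geq 0\}$'' characterization described above; the argument rests on the monotonicity of $h$ (strict when $\sigma>0$). In the boundary cases ($h$ everywhere of one sign, or $\sigma=0$) the claim is either vacuously preserved or degenerate, leaving nothing to verify.
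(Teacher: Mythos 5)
Your proof is correct and follows essentially the paper's own route: the paper's proof simply observes that (\ref{eq:po}) delivers the sufficient condition of Proposition \ref{pr:pool} for the induced $S$ of Proposition \ref{pr:obj*}, with parts (i)--(iii) ``following right away,'' and you fill in exactly those omitted details (the positivity of $c(a)-\lambda$ via Condition C, the implicit fact that $F_0(x)\geq R(x)$ pointwise since $R$ is among the priors that $F_0$ is fosd-minimal over, and the support-on-$\{h\geq 0\}$ argument for the fosd monotonicity in (iii), which indeed does not follow from a pointwise comparison). Two cosmetic remarks: $S(x)\geq R(x)$ pointwise means $R$ fosd $S$, not ``$S$ fosd $R$'' as written in your approach paragraph (your main argument uses the correct inequality); and in (iii) the degenerate-$F_0$ edge case at $a=\la$ is not really ``absorbed by the $\min$'' --- if $\sigma c(\la)<\lambda$ the displayed integral at $a=\la$ is negative --- but this wrinkle is present in the paper's own unqualified statement (and a degenerate $F_0$ already violates the maintained continuity assumption), while the full-pooling conclusion itself still holds there.
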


\begin{figure}[t!]
\includegraphics[width=.475\textwidth]{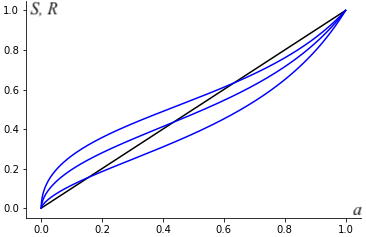}
\includegraphics[width=.475\textwidth]{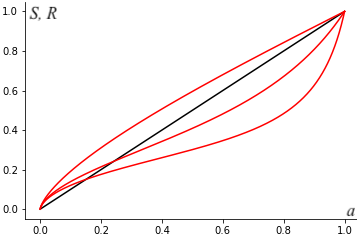}
\caption{\small{Schooling with lower censorship. In both panels, $R$ is uniform and in black. On the left, $S$ is shown in blue for three  values of $\gamma$: $S$ increases in fosd as $\gamma$ increases. On the right, $S$ is shown in red for three  values of $\lambda$: again, $S$ increases in fosd as $\lambda$ increases.}}
\label{fig:Gex}
\end{figure}

\emph{Schooling with Lower Censorship}.  Let $R$ be  uniform on $[0,1]$, and $F_0=a^\gamma$ for $\gamma\in[0,1]$. If $\gamma=1$, then $F_0=R$: students have no ex-ante information about their own abilities. For lower $\gamma$, $F_0$ is first-order stochastically dominated by $R$, so that the lowest types are pessimistic relative to the population average. The lower is $\gamma$, the further the belief of the most pessimistic type from that of the average agent. Set  $c(a)=1/a$ and $\lambda < 1$. Finally, set $\sigma=0$, which means that the cost of effort is not internalized at all by the  parents when choosing to join the school.

With these functional forms, use equation (\ref{eq:G}) to map the school's ``distorted prior" $S$: 
\begin{align}
S(a)=\frac{a^\gamma-\lambda a}{1-\lambda a}
\label{eq:exG}
\end{align}
In this special case, $S$ is a cdf (see Figure \ref{fig:Gex}). Since $S$ has a concave-convex shape,  Theorem \ref{th:1} immediately implies that lower censorship, whereby all abilities below a threshold are pooled together and all abilities thereafter are fully revealed, is the optimal categorization. 
\begin{proposition}
There is $\tilde{a}\in(0,1]$ such that the solution to the school problem is to pool agents with $a\in[0,\tilde{a})$, and fully reveal the ability of all agents with $a\geq\tilde{a}$.
\label{pr:ex}
\end{proposition}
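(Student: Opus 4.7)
The plan is to apply Theorem \ref{th:1}. Since $R$ is uniform on $[0,1]$, the quantile $R^{-1}$ is the identity, so $H=S\circ R^{-1}$ coincides with $S$ as in \textnormal{(\ref{eq:exG})}. The proposition reduces to producing $\tilde{a}\in(0,1]$ such that the lower convex envelope $\breve{H}$ is linear on $[0,\tilde{a}]$ and coincides with $S$ on $[\tilde{a},1]$: by Theorem \ref{th:1}, such a $\breve{H}$ is induced by an optimal categorization that pools $[0,\tilde{a})$ and separates above.

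The core step is to establish a concave-then-convex shape for $S$. Writing $g(z)=1-S(z)=(1-z^\gamma)/(1-\lambda z)$, two successive differentiations of the identity $g(z)(1-\lambda z)=1-z^\gamma$ yield
\begin{equation*}
g''(z)(1-\lambda z)=2\lambda g'(z)+\gamma(1-\gamma)z^{\gamma-2}.
\end{equation*}
For $\gamma\in(0,1)$ the term $\gamma(1-\gamma)z^{\gamma-2}$ blows up at $0$, so $g''>0$ and hence $S''<0$ on a right-neighborhood of $0$: $S$ is concave near zero. I would next show that $g''$ has at most one sign change on $(0,1)$. Substituting $g'=(\lambda g-\gamma z^{\gamma-1})/(1-\lambda z)$ and $g=(1-z^\gamma)/(1-\lambda z)$ into $g''(z)=0$ and clearing $(1-\lambda z)^3$ reduces the inflection condition to a relation purely in $z$ and $z^\gamma$; I expect to show that the resulting expression is monotone in $z$, hence has at most one root on $(0,1)$. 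This yields that $S$ is either globally concave on $[0,1]$ or concave on $[0,a^*]$ and convex on $[a^*,1]$ for some $a^*\in(0,1)$.

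Given the shape, the envelope is constructed routinely. If $S$ is globally concave, the chord $y=z$ from $(0,0)$ to $(1,1)$ lies weakly below $S$ and is convex, so $\breve{H}(z)=z$; set $\tilde{a}=1$, corresponding to full pooling. Otherwise set $\tilde{a}=\arg\min_{a\in(0,1]}S(a)/a$. Because $S(z)\sim z^\gamma$ near $0$ forces $S(z)/z\to\infty$ as $z\downarrow 0$, and because $S(z)/z$ is nonincreasing on $(0,a^*]$ (by concavity of $S$ together with $S(0)=0$), the minimum is attained in $[a^*,1]$; in particular $\tilde{a}>0$. The line $L(z)=[S(\tilde{a})/\tilde{a}]z$ lies weakly below $S$ on $[0,1]$---on $(0,a^*]$ because $S(z)/z\geqslant S(a^*)/a^*\geqslant S(\tilde{a})/\tilde{a}$, and on $[a^*,1]$ by the defining minimality of $\tilde{a}$---and equals $S$ at $\tilde{a}$; moreover $S$ is convex on $[\tilde{a},1]\subseteq[a^*,1]$. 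The piecewise function (equal to $L$ on $[0,\tilde{a}]$ and to $S$ on $[\tilde{a},1]$) is continuous, weakly below $S$, and convex (the interior first-order condition $S'(\tilde{a})=S(\tilde{a})/\tilde{a}$ matches left- and right-derivatives, or the transition is trivial at $\tilde{a}=1$), and it coincides with $S$ on $[\tilde{a},1]$, so it equals $\breve{H}$. Theorem \ref{th:1} then yields the stated pooling-then-separating categorization.

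The main obstacle is the single-crossing step for $g''$: it is a computation in $z$ and $z^\gamma$ that requires careful bookkeeping of the competing power-law term $\gamma(1-\gamma)z^{\gamma-2}$ against the rational piece $2\lambda g'$, and will exploit the specific form $F_0(a)=a^\gamma$ rather than any abstract concavity property. The boundary cases $\gamma=0$ (where $S$ collapses to a point mass at $0$, yielding $\tilde{a}=1$) and $\gamma=1$ with $\lambda>0$ (where $S$ is globally convex, a knife-edge case giving $\tilde{a}=0$ outside the stated range) should be handled separately by direct inspection.
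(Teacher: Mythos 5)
Your route is essentially the paper's own: since $R$ is uniform, $H=S\circ R^{-1}=S$ as in (\ref{eq:exG}), the paper simply observes that this $S$ has a concave-then-convex shape and invokes Theorem \ref{th:1} to conclude that lower censorship is optimal. Your explicit construction of the envelope (the chord of slope $\min_{a\in(0,1]}S(a)/a$ through the origin, followed by $S$ itself, with the tangency condition $S'(\tilde{a})=S(\tilde{a})/\tilde{a}$ gluing them convexly) is correct, and your handling of the boundary cases is if anything more careful than the text: for $\gamma=1$ and $\lambda>0$ the induced $S$ is strictly convex, full separation is uniquely optimal, and no $\tilde{a}>0$ exists, a knife-edge the proposition glosses over.

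The one incomplete piece is precisely the step you flag as the main obstacle: the single-crossing of $S''$ is announced (``I would next show,'' ``I expect to show'') rather than proved. It does go through, exactly along the lines you sketch. Differentiating (\ref{eq:exG}) twice and clearing $(1-\lambda z)^3$ gives
\begin{equation*}
S''(z)\,(1-\lambda z)^3=-\gamma(1-\gamma)z^{\gamma-2}+2\lambda\gamma(2-\gamma)z^{\gamma-1}+\lambda^2(1-\gamma)(2-\gamma)z^{\gamma}-2\lambda^2\equiv\phi(z),
\end{equation*}
and
\begin{equation*}
\phi'(z)=\gamma(1-\gamma)(2-\gamma)\,z^{\gamma-3}(1-\lambda z)^2\geqslant 0,
\end{equation*}
so $\phi$ is nondecreasing on $(0,1)$, tends to $-\infty$ as $z\downarrow 0$ when $\gamma\in(0,1)$ (the $z^{\gamma-2}$ term dominates), and therefore changes sign at most once, from negative to positive. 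This is the concave-then-convex shape your envelope construction needs; with this computation inserted, your proposal is a complete proof following the same approach as the paper (which itself only asserts the shape and cites Theorem \ref{th:1}).
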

We can easily compute $\tilde{a}$, and perform comparative statics with respect to $\lambda$ and $\gamma$. When the lowest-belief type is more optimistic, i.e. $\gamma$ is higher, $\tilde{a}$ is lower and there is more separation. The connection with the market value for learning is not monotonic. Initially, higher value for learning induces more separation, but for high values of $\lambda$, increasing $\lambda$ leads to more pooling.

If, otherwise, the distorted prior were convex-concave, upper censorship would be optimal. Other papers in the literature find that upper or lower censorship are optimal signals for the sender.  Alonso and Camara (2016b) and Kolotilin, Mylovanov and Zapechelnyuk (2019) show conditions for optimality of lower and upper censorship -- respectively, that the sender's payoff as a function of the receiver's posterior mean be concave-convex and convex-concave. In both those papers, they consider environments where the sender's payoff does not directly depend on the state and the sender's payoff is nonlinear in the receiver's posterior mean. In our model, the sender's payoff is state dependent and linear in the receiver's posterior mean. Perhaps surprisingly, these environments are distinct and cannot be mapped to each other.

\section{Remarks on Monotonicity\label{sec:mono}} The monotonicity restriction is central to our analysis. We have proposed a method to find an optimal categorization under that restriction.

Suppose instead that all possible categories can be used. One can use a version of Kamenica and Gentzkow's (2011) technique to directly create a concavification argument for the resulting problem.\footnote{ In fact, Alonso and Camara (2016) do adapt the Kamenica-Gentzkow concavification argument to the case of sender and receiver with heterogeneous priors.} Arbitrary categorization allows posteriors to be convexified to any  degree, so that {non-concavities} in the seller's value function (on the domain of {her} posteriors) can be removed. These convexifications, achieved through belief-splitting, are not available when categories are monotone, necessitating a different approach.

Another way to appreciate the distinction is to see how monotonicity converts a problem solvable using ``local methods" into one that cannot be so tackled. Consider a pooling category in the form of a single interval $[a,b]$. Suppose that the sender entertains an alternative, which is to insert a small separating category $[c,d)$ within this interval; $a < c \simeq d < b$. With no monotonicity restriction, this can be done by continuing to maintain $[a, c)\cup [d,b]$ as a pooling category, so the variation $[c,d)$ can be inserted continuously and with only local effects. With monotonicity, this insertion immediately splits the pooling category into two separate intervals, and the effects are large and global.

One possible reaction that any modeling of sender behavior ``should" freely permit all categorizations. After all, her payoff would generally be higher. We disagree. Monotonicity can be a real constraint; it isn't just a modeling device. There are at least two constraints that come to mind. The first is legal. For instance, if a credit rating agency grades the riskiness of debt issuers, a non-monotone categorization could invite a lawsuit from an issuer who can plausibly argue that they received a worse rating than other issuers with more risky debt. A similar legal constraint arises in questions of affirmative actions, where it is often argued that demonstrable meritocracy has to take precedence over considerations of race or gender. The second is incentive-based. As argued in some detail in Section \ref{sec:educ},  monotonicity could emerge as the outcome of a broader mechanism design problem which in which incentive constraints need to be respected.  

Additionally, there may be methodological reasons. For instance, certain problems translate into equivalent monotonic problems. Kolotilin and Zapechelnyuk (2019) establish an equivalence between balanced delegation problems and monotone persuasion problems. Our method finds the optimal signal in their linear persuasion environment when the sender's value is linear in the receiver's action. Within their equivalence class, we can therefore solve for the optimal delegation set.

And of course, monotonicity affects the results. In Section \ref{sec:exs}, we argued that our method can find the optimal monotonic signal in Rayo and Segal (2010). The resulting structure is quite different. For instance, Rayo and Segal characterize the optimal signal (without monotonicity restrictions), one of their main results is that two prospects $(v,\pi)$ and $(v',\pi')$ can only be optimally pooled if they are \emph{not ordered}: $(v-v')(\pi-\pi')<0$. This is no longer true when we look at the optimal monotonic categorization. It is possible for a pooling interval to contain a region where $\pi$ is increasing --- which is equivalent to $S\circ R^{-1}$ being convex --- and therefore pool two ordered prospects.

Furthermore, Rayo and Segal show (2010, Lemma 3) that in any optimal signal, all prospects pooled in the same signal realization must have payoffs $(v,\pi)$ that lie on a straight line with nonpositive slope. One implication is that if $\pi(v)$ does not have an interval where it is linear and downward sloping, then every realization of the optimal signal pools together at most two qualities.\footnote{Kolotilin and Wolitzky (2020) make the similar point that optimal signals are ``pairwise,'' in the sense that each induced posterior distribution has at most binary support.} Optimal monotonic signals, on the other hand, may pool together prospects containing more than two qualities. 

In summary, these are different structural features. They could coincide or continue to differ in special settings. For instance, it can be shown that the necessary and sufficient conditions for full separation are no different with or without monotonicity, but the same is not true of the conditions that characterize full pooling.\footnote{Full separation is optimal if and only if  $(v-v')(\pi(v)-\pi(v'))>0$ for every $v$ and $v'$. This is \emph{equivalent} to $S\circ R^{-1}$ being globally convex, the condition we found in Proposition 
\ref{pr:pool}. So the necessary and sufficient conditions for full separation are unaffected by the monotonicity constraint.
 But this is not true of full pooling. Without monotonicity, that condition (Rayo and Segal 2010) asks for $\pi(v)$ to be decreasing and affine. This is a  stronger condition than the full pooling condition we found in Proposition \ref{pr:sep}. In fact, this condition is stronger than  $S\circ R^{-1}$ being globally concave -- which is equivalent to $\pi(v)$ being decreasing. The conditions for full pooling and full separation found by Alonso and Camara (2016a), when sender payoff is linear in the receiver's action, are equivalent to the ones in Rayo and Segal (2010).} 

\section{Nonlinear Sender Payoffs\label{sec:nl}} Consider the following generalization of our model, in which the receiver takes an action  predicated on his conditional expectation (after learning the category membership of the object), and the sender obtains a payoff from that action. If quality $x$ is revealed, the sender's payoff is $U(x)$ and the receiver's payoff is $V(x)$. Otherwise, if $x$ is pooled on $[a,b)$, the receiver's action is based on her posterior $A(x)= \E _R(V(y) | [a,b))$, with resulting sender payoff  $U(A(x))$. A simple change of variables allows us to treat one of these payoff functions --- say $V$ --- as linear, which is what we do, but it is understood that any curvature restrictions on one of the functions (e.g., concavity) translate into the opposite restriction (e.g., convexity) on the other.

The sender picks categorization $A \in \mathcal{A}_R$ to 
\begin{equation*}
\mbox{maximize } \int_\la^{\bar{a}} U(A(x)) dS(x),
\label{eq:maxan}
\end{equation*}
where --- with $V$ normalized to be linear --- the receiver's payoff $A(x) = x$  in any separating region, and $A(x) = \E _R(y | [a,b))$ in any pooling region $[a,b)$ of $A$. 

This is a non-trivial extension. Generally, no  change of variables can convert it into our original maximand (\ref{eq:maxa}).
Curvature restrictions on $U$ allow us to make some progress, but recall from the normalization above that there is no particular presumption that $U$ be concave or convex. Below, we discuss some implications of curvature restrictions.
\begin{proposition}
Suppose that $U$ is convex. Then there is an optimal categorization $A^*$ such that any separating quality $x$ in any optimal solution to the categorization problem with linear payoff $U(x)=x $ is also separating under $A^*$. 
\label{pr:sup}
\end{proposition}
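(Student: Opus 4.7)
The plan is to combine the linear-optimum structure from Theorem~\ref{th:1} with a local refinement argument that exploits convexity of $U$. By Theorem~\ref{th:1}, the set of qualities that separate in some linear optimum is $\mathcal{S}^L = \{x\in[\la,\ha]: H(R(x))=\breve{H}(R(x))\}$, where $H=S\circ R^{-1}$; its complement is a countable disjoint union $\bigsqcup_k(c_k,d_k)$ of maximal open gaps on which $H>\breve{H}$, with both endpoints $c_k,d_k\in\mathcal{S}^L$ by maximality. I would build $A^*$ by setting $A^*(x)=x$ for $x\in\mathcal{S}^L$ and, on each closed gap $[c_k,d_k]$, letting $A^*|_{[c_k,d_k]}$ solve the restricted convex sub-problem $\max\int_{c_k}^{d_k}U(A(x))\,dS(x)$ over monotonic categorizations of $[c_k,d_k]$ (existence via the same compactness argument used for Theorem~\ref{th:1}). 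Every $x\in\mathcal{S}^L$ is then separated under $A^*$ by construction, so the proposition reduces to verifying that $A^*$ is optimal for the full convex problem.

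The centerpiece is a Splitting Lemma: if $A$ pools $[a,b)$ and $x_0\in(a,b)\cap\mathcal{S}^L$, then splitting $[a,b)$ into $[a,x_0)$ and $[x_0,b)$ weakly raises $V(A)\equiv\int U(A(x))\,dS(x)$. Let $\bar y,\bar y_1,\bar y_2$ denote the $R$-conditional means on $[a,b),[a,x_0),[x_0,b)$, and set $w_1=[R(x_0)-R(a)]/[R(b)-R(a)]$ and $\tilde w_1=[S(x_0)-S(a)]/[S(b)-S(a)]$, with $w_2,\tilde w_2$ their complements, so $\bar y=w_1\bar y_1+w_2\bar y_2$. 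Because $\breve{H}$ is convex, $\breve{H}\le H$, and $\breve{H}(R(x_0))=H(R(x_0))$, the chord of $H$ joining $(R(a),H(R(a)))$ and $(R(b),H(R(b)))$ evaluated at $R(x_0)$ lies weakly above $H(R(x_0))$; rewriting this in $(R,S)$-coordinates yields $w_1\ge\tilde w_1$. Convexity of $U$ gives $w_1U(\bar y_1)+w_2U(\bar y_2)\ge U(\bar y)$; since $\bar y_1\le\bar y_2$ and $U$ is monotone, shifting weight from $\bar y_1$ to $\bar y_2$ preserves the inequality, so $\tilde w_1 U(\bar y_1)+\tilde w_2 U(\bar y_2)\ge U(\bar y)$. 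Multiplying by $S(b)-S(a)$ establishes the lemma.

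To conclude, given any $A$ I would iteratively apply the Splitting Lemma at every $c_k$ and $d_k$ falling in the interior of a pool of $A$; a monotone-convergence argument over these countably many splits yields a limit refinement $\hat A$ with $V(\hat A)\ge V(A)$ whose pools are each contained either in a single gap $[c_k,d_k]$ or in a single connected component of $\mathcal{S}^L$. On any pool $I\subseteq\mathcal{S}^L$ of $\hat A$, $H=\breve{H}$ is convex on $R(I)$, so (as in Proposition~\ref{pr:sep}) $S$ first-order stochastically dominates $R$ on $I$; combining Jensen with monotonicity of $U$ then gives $\int_I U(x)\,dS\ge U(\E_R[y\mid I])\,S(I)$, so separating such a pool weakly improves $V$. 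On each $[c_k,d_k]$, $A^*|_{[c_k,d_k]}$ is by construction optimal for the convex sub-problem, hence $V(A^*)\ge V(\hat A)\ge V(A)$. The main obstacle is the countable-iteration step: one must verify that splitting at infinitely many points inside a single pool produces a well-defined refinement whose value is the monotone limit of the (weakly improved) values along the sequence of partial refinements, which needs a careful monotone/dominated-convergence argument for the $R$-conditional-mean integrals that appear in the sub-pool values.
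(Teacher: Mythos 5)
Your core inequality is the same as the paper's. The Splitting Lemma --- chord of $H$ over a pool lies weakly above $H$ at a touching point of the envelope, hence the $S$-weight of the left piece is below its $R$-weight, then Jensen for convex $U$ plus a reweighting toward the higher conditional mean --- is exactly the chain of inequalities the paper runs at the point $c$ where conditional fosd of $R$ over $S$ fails on the pool. (Both you and the paper additionally use $U(\bar y_1)\le U(\bar y_2)$, i.e.\ monotonicity of $U$, which is not among the stated hypotheses; this is a shared, not a new, lapse.) Where you genuinely diverge is the global architecture: the paper takes a \emph{minimal} optimal categorization $A^*$ for the convex problem --- existence is delegated to Proposition \ref{app-pr:existence} in the Supplementary Appendix --- and derives a contradiction from any overlap between a pool of $A^*$ and a linear-optimal separating interval, using Proposition \ref{prop:poolint} to locate the split point. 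You instead construct a candidate $A^*$ explicitly (separate on $\{H=\breve H\}$, optimize the nonlinear objective on each gap) and try to prove its optimality by refining an arbitrary $A$ through countably many splits. The paper's route buys brevity and sidesteps any limiting argument; yours, if completed, would yield the sharper statement that the optimum can be taken to separate the \emph{entire} contact set $\{H=\breve H\}$, not just the linear-optimal separating intervals.

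Two steps in your version are genuine gaps. First, existence of optima for the restricted nonlinear sub-problems on each $[c_k,d_k]$ does not follow from ``the same compactness argument used for Theorem \ref{th:1}'': Theorem \ref{th:1} is proved constructively via the lower convex envelope, with no compactness argument to borrow, and for a nonlinear $U$ existence is precisely the nontrivial point the paper's appendix result (minimal optimal categorizations) is there to supply. Second, the countable-iteration limit that you flag yourself is not a cosmetic issue: a single pool of $A$ can contain infinitely many gap endpoints, and you must show that the limit of the partial refinements is itself an element of $\mathcal{A}_R$ whose value is the limit of the weakly increasing values --- the conditional means of the sub-pools move as you keep splitting, so this requires an actual convergence argument rather than monotonicity alone. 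Both burdens disappear in the paper's contradiction argument, which needs only one split of one pool of an already-existing minimal optimum. A minor further wrinkle: your definition assigns $A^*(c_k)=c_k$ while the gap sub-problem may want to pool $[c_k,e)$; this is harmless for the proposition (with maximal gaps, $c_k$ is pooled in every linear optimum, so it is never a linear-separating quality), but you should state which convention governs at gap endpoints.
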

Proposition \ref{pr:sup} makes a seemingly intuitive point: the convexity of $U$ is a force that pushes the sender towards revealing additional information. Therefore the separating regions expand compared to the benchmark of the linear case. Of course, if $U$ is a concave function, we would then expect the opposite effect to be in place, so that the sender wants to pool to a greater degree. One reason why we refer to the proposition as  ``seemingly intuitive" is that the latter result does not hold in general. 


\begin{figure}[t!]
\includegraphics[width=1\textwidth]{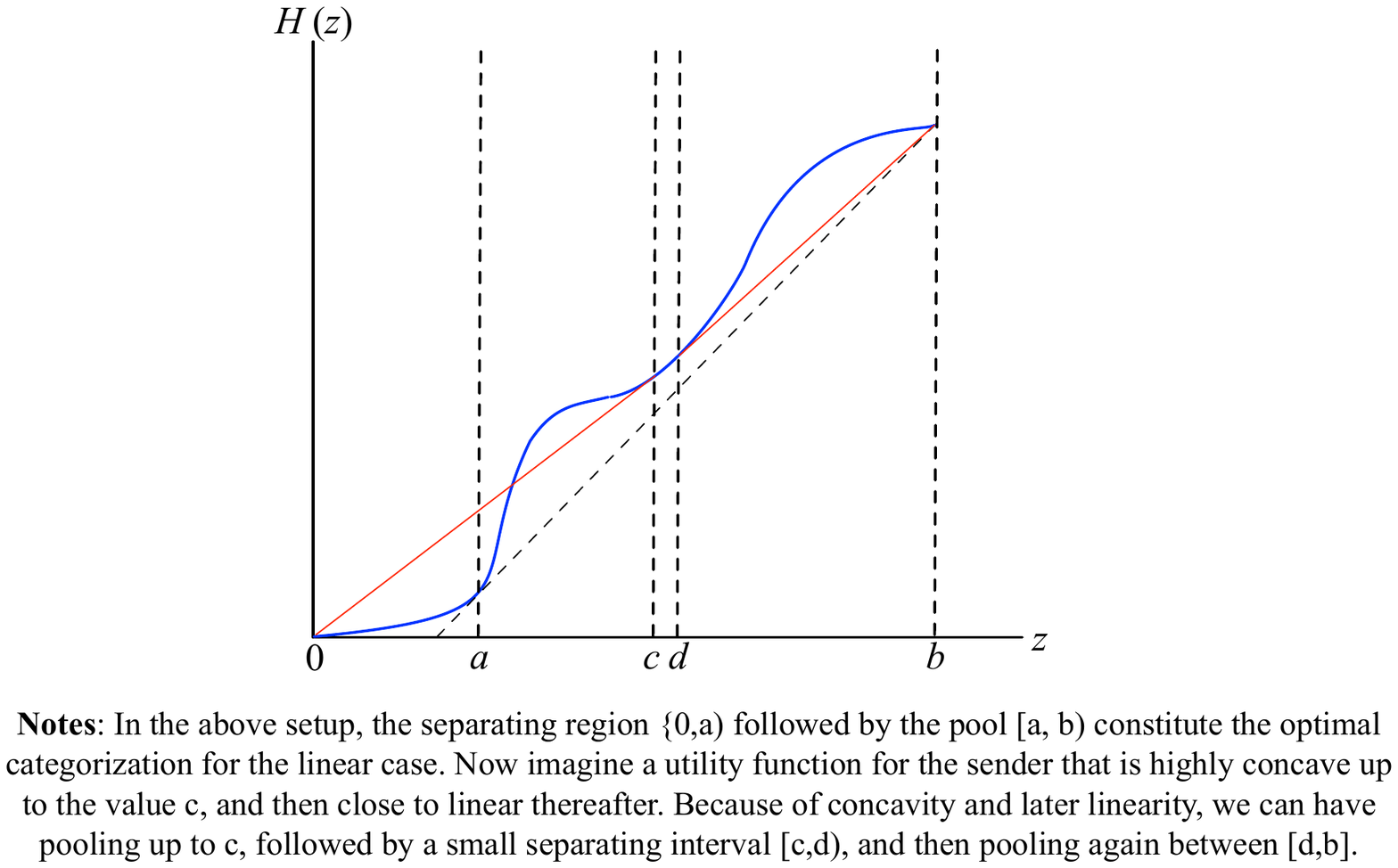}
\caption{\small{An example in which pooling regions do not expand with the concavity of $U$. The curved line is $H = S\circ R^{-1}$.  The dashed line connecting $(a,H(a))$ and $(b,H(b))$ indicates the pooling region in our linear model. The red lines connecting $(0,H(0))$ to $(c,H(c))$ and $(d,H(d))$ to $(b,H(b))$ indicate the pooling regions in the nonlinear case.}}
\label{fig:concu}
\end{figure}

Consider the example described in Figure \ref{fig:concu}. In the linear case, the optimal categorization is given by separating every quality on the interval $[0,a)$ and pooling every quality on the interval $[a,b]$. Suppose, now, that $U$ is concave --- in fact, highly concave up to some value $c$ and close to linear thereafter. The extreme initial concavity can cause pooling on the interval $[0,c)$.  Because $U$ is almost linear after $c$, we can solve for the optimal categorization thereafter (between $c$ and $b$) using our method. That will pick out $[c,d)$ as some initial separating interval and $[d,b]$ as a second and final pooling interval. In summary, the pooling regions with this sender payoff function are $[0,c)$ and $[d,b]$, and their union is not a superset of the pooling region for the linear case.

As a final remark, we note that nonlinear sender payoffs can sometimes be directly embedded into our linear framework, though the extent to which our techniques extend is an interesting open question at this stage. Suppose sender and receiver share a common prior ($S=R$) and the sender's payoff function is given by $U(x,A)=\lambda_1(x)A+\lambda_2 A^2$, for  $\lambda_1:[\la,\bar{a}]\rightarrow \mathbb{R}$ and $\lambda_2\in\mathbb{R}$.\footnote{Here we are restricting to $S=R$ and allowing $U$ to be state-dependent, as well as nonlinear. If $\lambda_2=0$, then, as in the discussion in Section \ref{sec:app},  we return to the linear case with distinct priors, where $dS(x)=\lambda_1(x)dR(x)$.}
In this case, the sender picks $A\in\mathcal{A}_R$ to maximize
\[
\int_\la^{\bar{a}}\left[\lambda_1(x)A(x) +\lambda_2 A(x)^2\right]dS(x).
\]
By using $S=R$ and the definition of $A(x)$, it is easy to see that
\begin{align*}
\int_\la^{\bar{a}}\left[\lambda_1(x)A(x) +\lambda_2 A(x)^2\right]dS(x) & = 
\int_\la^{\bar{a}}\left[\lambda_1(x)+\lambda_2 A(x)\right]A(x)dR(x)\\
& =\int_\la^{\bar{a}}\left[\lambda_1(x)+\lambda_2 x\right]A(x)dR(x)
=\int_\la^{\bar{a}}A(x)d\hat{S}(x),
\end{align*}
where $d\hat{S}(x)=\left[\lambda_1(x)+\lambda_2 x\right]dR(x)$. Now our method can be applied to find the optimal monotonic signal. Notice how the nonlinearity of $U$  induces a different sender prior, so this is yet another argument that one could add to the discussion in Section \ref{sec:exs}.\footnote{Notice also that we could have also combined this idea with state-dependent preferences --- for instance, $\lambda_1$ could have been taken to be a function of $x$ without making any difference to the analysis.} 

When $\lambda_2$ increases, $d\hat{S}$ becomes ``more increasing'', or equivalently, $\hat{S}\circ R^{-1}$ becomes ``more convex." This means that the sender has stronger incentives to create separating intervals. If $\lambda_2$ is large enough, $\hat{S}\circ R^{-1}$ is in fact convex, in which case the optimal monotonic categorization is full separation. On the other hand, if $\lambda_2$ is small enough (perhaps negative), then the optimal monotonic categorization is full pooling. 

\section{Bibliographical Notes\label{sec:lit}}
We contribute to the literature on ``information design," stemming from Rayo and Segal (2010) and Kamenica and Gentzkow (2011). The problem in our paper has three main features: (i) the sender's payoff is linear in the receiver's posterior mean about the state; (ii) sender and receiver hold distinct priors about the distribution of states; and  (iii) the sender must choose monotonic categorizations. We provide a method that generates the optimal monotonic categorization for any pair of priors held by the sender and the receiver.

Our paper is related to Alonso and Camara (2016a), who extend the concavification method to the case of heterogeneous priors between sender and receiver, but do not impose monotonicity. As has been widely noted in the literature, the concavification method alone is typically not sufficient to characterize the optimal signal, especially when the state and action space is large as in our model. We add to the literature by providing a simple method to find the optimal monotonic signaling structure. In Section \ref{sec:mono}, we compare our solution to optimal signals not subject to the monotonicity restriction. 

Galperti (2019) and Kartik, Lee and Suen (2020) also study environments with heterogeneous priors across agents. Both these papers study problems different from ours, but some results in Kartik, Lee and Suen (2020) have parallels for the  specific question of when full pooling or separation of qualities is optimal.

Kolotilin (2018) and Dworczak and Martini (2019)  study optimal signaling structures in environments where the receiver's action depends only on their posterior mean. Unlike us, they focus on cases where the sender's payoff depends on the state only through the receiver's action, an assumption that Lipnowski and Ravid (2020) call ``transparent motives."\footnote{In Kolotinin (2018), some results apply to a broader class of persuasion problems. Assumption 2 implies that the receiver's action depends only on their posterior mean and that the sender's payoff depends on the state only through the receiver's action.}\footnote{Lipnowski and Ravid (2020) study a cheap talk problem, rather than Bayesian persuasion.} In our paper, sender's payoffs depend linearly on the receiver's action, but also depend on the state via their prior over states, which is distinct from the prior held by the receiver. In addition, the sender is restricted to choose a monotonic information structure, while Kolotilin (2018) and Dworczak and Martini (2019) deal with the optimality of a structure in the class of \emph{all} signaling structures, monotone or not. We note that one of the results in Dworczak and Martini (2019) provides conditions for the optimality of monotone structures, though again for the case of a ``transparent motive.'' Kolotilin (2018) also provides conditions for a special type of monotone structure, which he calls interval revelation, to be optimal. Kolotilin and Li (2019) provide some characterization results for monotone persuasion, again in a setup where the sender's payoff is state independent.

As we show in Section \ref{sec:app}, there is an equivalence between heterogeneous sender-receiver priors and state dependent priors for the sender. Given that equivalence, the problems studied in Rayo and Segal (2010) and Rayo (2013) are equivalent to problems where sender and receiver have distinct priors. Rayo and Segal (2010) characterize optimal signaling structures without the monotonic restriction. We use their characterization in Section \ref{sec:mono} to compare the monotonic to the non-monotonic solution. In passing, we note an equivalence between Rayo and Segal (2010) and Alonso and Camara (2016a) when the sender's payoff is linear in the receiver's posterior.

Rayo (2013) characterizes optimal monotonic signals. Let $h$ be a function describing how the sender's payoff depends on the state. Rayo shows that there exists a unique set of maximal intervals such that within each interval, the covariance between $h$ and any non-decreasing function of the state defined on that same interval is negative. He argues that, if a solution to the sender's problem exists, it must pool together the states in each of the maximal intervals and reveal all the other states. We show that a solution to the sender problem can be found by chalking out the lower convex envelope to a particular function of sender and receiver priors. Intervals where this envelope departs from the function constitute optimal pooling intervals, and all other intervals are separating. In the Supplementary Appendix, we show that under mild assumptions, we can write down an algorithmic procedure that generates the pooling and separating intervals in our solution, which leads to a more transparent interpretation of  the characterization in Rayo (2013).

In part, our analysis can be viewed as by bridging --- and going beyond --- Rayo's specific problem to a more abstract setting where sender and receiver have distinct priors. This allows us to interpret the sender's incentives to pool states within certain intervals as due to ``greater local pessimism'' than the receiver, though the converse (as already mentioned in the Introduction) is not true. In particular, the interpretation of state-dependent payoffs as heterogeneous priors motivates the exercise we discuss in Section \ref{sec:pre}. We show that if we flip the priors of the sender and receiver, the optimal monotonic signal  does not simply get ``flipped'' as well. Rather, there is a preponderance of pooling: some states are pooled both in the original and in the ``flipped'' problem.


In our digression into moral hazard in the grading problem, we are related to Boleslavsky and Kim (2020), who also look at an environment in which signaling structures motivate costly effort. They extend the concavification method in Kamenica and Gentzkow (2011) to this environment. In our application, agents exert effort after drawing their ability and the incentive constraint implies that signaling structures must be monotone. This is not the case in Boleslavsky and Kim (2020), where costly effort improves the distribution of types that will be drawn from. As already mentioned, it is not always easy to characterize the optimal signaling structure using the concavification method when the state and action spaces are large. In fact, most of the characterizations provided in Boleslavsky and Kim (2020) relate to the case of a binary state space or binary actions. Our procedure allows us to characterize the optimal structure that incentivizes agents and signals their abilities to the receivers in the case of a continuum of agent types and a continuum of receiver actions. Rodina and Farragout (2016) and Saeedi and Shourideh (2020) also study the problem of a principal who wants to improve an agent's investment in productivity when the only instrument at hand is an information disclosure policy. Their environment is different from ours both in how the agent's effort decision is set up and in the grading schemes that the authors allow for. 

Moldovanu, Sela and Shi (2007) study the problem of a principal who wants to incentivize agents who are playing a contest for status to exert effort by designing ``status categories''. In the context of a contest for status, the principal is only designing how coarsely to order agents from best to worst, rather than concealing or revealing their actual qualities. They find that coarser schemes, such as putting agents in two categories, are optimal when the distribution of qualities is sufficiently concave. Dubey and Geanakoplos (2010) also study the design of categories to incentivize agents in games of status, allowing for both cardinal and ordinal comparisons. They too find that often agents are best motivated by coarse categories. Our abstract setting is free of moral hazard, though our educational application studies both adverse selection and moral hazard. Motivation by the use of a mix of coarse and revealing categories is discussed in the Supplementary Appendix. 

Doval and Skreta (2018) extending the concavification method to the case where the sender is subject to a certain class of incentive constraints and review a literature where their tools apply. Lipnowski, Mathevet and Wei (2020) study a problem where the sender is constrained to the receiver's decisions to pay attention to the signals he is sent. In their model, the sender has the same objective as the receiver, but does not take into account attention costs incurred by the receiver. Similarly, in our application, the school does not fully incorporate the student's effort costs of learning.  Guo and Shmaya (2019) study the problem of a sender who sends a message to a receiver with private information about his type in order to motivate him to take one of two actions. They show that an incentive compatible mechanism takes the form of nested intervals: each receiver type accepts on an interval of states and a more optimistic type's interval contains a less optimistic type's interval. They then develop tools to solve for the optimal signaling structure within this category of structures with nested intervals. In our paper, incentive compatibility similarly implies a restriction on the class of signaling structures available --- in our case, to monotone structures. 



\section{Proofs\label{sec:proofs}}
\emph{Proof of Observation \ref{obs:v}.} Part (i). Let Pool denote the collection of pooling intervals with generic element $[p,p')$.
Because $A\in\mathcal{A}_R$ and $R$ is strictly increasing, we have:
\begin{align*}
\int_\la^{\bar{a}} A(a) dS(a) &=\int_{[\la,\bar{a}]\setminus \text{Pool}}adS(a)+\sum_{\text{Pool}} \mathbb{E}_R\left(a|a\in [p, p')\right)\left[ S(p')-S(p)\right]\nonumber\\
&=\int_{[\la,\bar{a}]\setminus \text{Pool}}adS(a)+\sum_{\text{Pool}} \int_{p}^{p'}adR(a)\left[\frac{S(p')-S(p)}{R(p')-R(p)}\right]\nonumber\\
&=\int_{[\la,\bar{a}]\setminus \text{Pool}}ad\Psi (a, A)+\sum_{\text{Pool}} \int_{p}^{p'}ad\Psi (a, A)= \int_\la^{\bar{a}} ad\Psi (a, A),
\end{align*}
where in the penultimate step $d\Psi$ is well defined as $\Psi$ has bounded variation, and the last equality follows from the continuity of $\Psi$ in $a$.

For any categorization $A$ and $z \in [0, 1]$, consider the ``percentile weighting function":
\[
\Phi (z, A) \equiv \Psi (R^{-1}(z), A).
\]
Define  a \emph{percentile pooling interval} of $A$ as any interval $[w, w')$ such that $[R^{-1}(w), R^{-1}(w'))$ is a pooling interval of $A$.  Then
\begin{equation}
\scalebox{0.83}{$\displaystyle{\Phi (z, A )}$}=\begin{cases}\scalebox{0.83}{$\displaystyle{H(w)+(z-w)\left[\frac{H(w')-H(w)}{w'-w}\right]\text{  if }z \mbox{ is in some percentile pooling $[w, w')$}}$};\\[0.5em]
\scalebox{0.9}{$\displaystyle{H(z) \text{  otherwise.}}$}\end{cases}
\label{eq:phi}
\end{equation}
In particular, the percentile weighting associated with a categorization $A\in\mathcal{A}_R$ equals $H$ in percentile separating regions and is a straight line connecting $(w,H(w))$ and $(w',H(w'))$ in percentile pooling regions of the form $[w,w')$. This means that $\mbox{Graph}(\Phi(\cdot,A))\subset \mbox{Co}(\mbox{Graph}(H))$, which immediately implies $\Phi(z,A)\geqslant \breve{H}(z)$.
\qed

\emph{Proof of Proposition \ref{pr:pool}.}
When $A$ is the categorization that pools every quality, then $\Phi(z,A)=H(0)+z\left(H(1)-H(0)\right)$. By Theorem 1, full pooling is then a solution to the sender's problem if and only if $\breve{H}(z)=H(0)+z\left(H(1)-H(0)\right)$ for all $z\in (0,1]$. Now notice that this condition is equivalent to 
\begin{equation*}
\frac{H(z)-H(0)}{z}\geqslant H(1)-H(0)
\end{equation*}  
for all $z\in (0,1]$.
Using the definition of $H$, this condition can be rewritten as $S(x)-S(\la)\geq (1-S(\la))R(x)$ for all $x\in[\la,\bar{a}]$.
\qed

\emph{Proof of Proposition \ref{prop:poolint}.}
If $R$ fosd $S$ on the interval $[a,b]$, then for all $x\in [a,b]$, 
\begin{equation*}
\frac{S(x)-S(a)}{R(x)-R(a)}\geqslant \frac{S(b)-S(a)}{R(b)-R(a)}
\end{equation*}
Or, equivalently, for every $z\in[w, w']$, where $w= R(a)$ and $w' = R(b)$,
\begin{equation*}
H(z)\geqslant H(w)+(z-w)\left[H(w')-H(w)\right].
\end{equation*}
Because the straight line connecting $(w, H(w))$ and $(w',  H(w'))$ lies in $\mbox{Co}(\mbox{Graph}(H))$,  it must be that for $z\in[w,w')$, $\breve{H}(z)\leqslant H(w)+(z-w)\left[H(w')-H(w)\right]\leqslant   H(z)$. But then there are percentiles $z'$ and $z''$ with $z\leqslant w<w'\leqslant z'$ such that for $z\in[w,w')$, $H(z)$ belongs to the straight line connecting $(z',  H(z'))$ and $(z'',  H(z''))$. So there exists an optimal categorization that pools the interval of percentiles $[z',z'')$, which contains the interval of percentiles $[w,w')$. Equivalently, such categorization pools the interval of qualities $[R^{-1}(z'),R^{-1}(z''))$, which contains the interval $[a,b)$.

Now let's prove the second statement. Suppose $[a,b)$ belong to a pooling interval $[a',b')$ with $a'\leqslant a$ and $b'\geqslant b$. Also suppose $R$ does not fsod $S$ on $[a',b']$. Then there exists $z\in(R(a'),R(b'))$ such that 
$$H(z)< H(w)+(z-w)\left[H(w')-H(w)\right]$$
where $w=R(a')$ and $w'=R(b')$. But that means that $H(w)+(z-w)\left[H(w')-H(w)\right]\neq \breve{H}(z)$, and so $[a',b')$ cannot be a pooling interval in the optimal categorization.\qed

\emph{Proof of Proposition \ref{pr:sep}.} 
The first statement is immediate given Theorem \ref{th:1}. As for the second, $H$ is convex in $[0,1]$ if and only if for every $w,x,z\in [0,1]$ with $w< x\leqslant z$: 
\begin{equation*}
\frac{H(x)-H(w)}{x-w}\leqslant \frac{H(z)-H(w)}{z-w}
\label{eq:cvx}
\end{equation*}
Letting $a=R^{-1}(w)$ and $b=R^{-1}(z)$ and $y=R^{-1}(x)$, this condition is equivalent to: for all $a,b,y\in[\la,\bar{a}]$ with $a<y\leqslant b$,
\begin{equation*}
\frac{S(y)-S(a)}{R(y)-R(a)}\leqslant \frac{S(b)-S(a)}{R(b)-R(a)}
\label{eq:cvx2}
\end{equation*}
If $S$ is a strictly increasing cdf, then this condition is equivalent to $S\left(\cdot|(a,b)\right)$ first-order stochastically dominating $R\left(\cdot|(a,b)\right)$ for every $a,b\in[\la,\bar{a}]$, which is in turn equivalent to $S$ dominating $R$ in the likelihood ratio order. \qed

\emph{Proof of Proposition \ref{prop:sepint}.} Let $w=R(a)$ and $w'=R(b)$.
If $[a,b)$ is a subset of some separating interval of qualities of categorization $A$, then for all $x\in[w,w')$, $\Phi(x,A)=H(x)$. And if $H(x)\neq\breve{H}(x)$ for some $x\in[w,w')$, then by Theorem \ref{th:1}, $A$ is not an optimal categorization. 

Now notice that, if $H=\breve{H}$ on this interval, then $H$ is convex on it, and so $S$ dominates $R$ in the likelihood ratio order over this interval. \qed

\emph{Proof of Proposition \ref{pr:alt}.} See Supplementary Appendix. \qed

\emph{Proof of Proposition \ref{pr:flip}.} Suppose that an optimal scheme in the original problem has at least one separating interval; consider any such interval $[a_1, a_2)$.\footnote{As always, use the convention that the right bracket is closed if $a_2 = \ha$.} Let $[z_1, z_2)$ be the associated percentile intervals; that is, $z_1 = R(a_1)$ and $z_2 = R(a_2)$. By Proposition \ref{prop:sepint}, $H(z) = \breve{H}(z)$ on $[z_1, z_2]$ and is therefore locally convex. It follows that $H^{-1}(y)$ is locally concave on the percentile interval  $[y_1, y_2]$, where $y_1 = H(z_1)$ and $y_2 = H(z_2)$. 

Now consider the flipped problem $(S', R')$ with $S' = R$ and $R'=S$.  Then it is obvious that $H^{-1} = S' \circ R'^{-1}$. Applying the second part of Proposition \ref{prop:poolint} to the fact that $H^{-1}(y)$ is locally concave on the percentile interval  $[y_1, y_2]$, we see that $[y_1, y_2]$ can be mapped via $S$ to an interval of optimally pooled qualities in the flipped problem. But it is easy to verify that $y_1 = S(a_1)$ and $y_2 = S(a_2)$.\footnote{For each $i = 1, 2$, $y_i = H (z_i) = S \circ R^{-1}(z_i) = S(a_i)$.} Therefore every separating interval $[a_1, a_2)$ becomes a subset of some optimal pooling interval in the flipped problem, and so the first assertion of the proposition is established.

Now assume the conditions under which the second assertion is made. Pick any optimal separating interval $[a_1, a_2)$, say for the original problem.  Let $[z_1, z_2)$ be the associated percentile interval; that is, $z_1 = R(a_1)$ and $z_2 = R(a_2)$. By Proposition \ref{prop:sepint} again, $H(z) = \breve{H}(z)$ on this percentile interval and is therefore convex on this interval. It is nowhere locally affine, by assumption, so in fact $H$ is \emph{strictly} convex on $[z_1, z_2)$.  It follows that $H^{-1}(y)$ is strictly concave on the percentile interval  $[y_1, y_2]$, where $y_1 = H(z_1) = S(a_1)$ and $y_2 = H(z_2) = S(a_2)$.  Consequently, $H^{-1}(y)$ must depart from its lower convex envelope $\breve{H^{-1}}(y)$ entirely on the \emph{open interval} $(y_1, y_2)$.

\begin{figure}[t!]
\begin{center}
\hspace*{-0.25in} 
\subfloat[{\small $y'_i \neq y_i$ for both $i$.}]{
\includegraphics[width=0.43\textwidth]{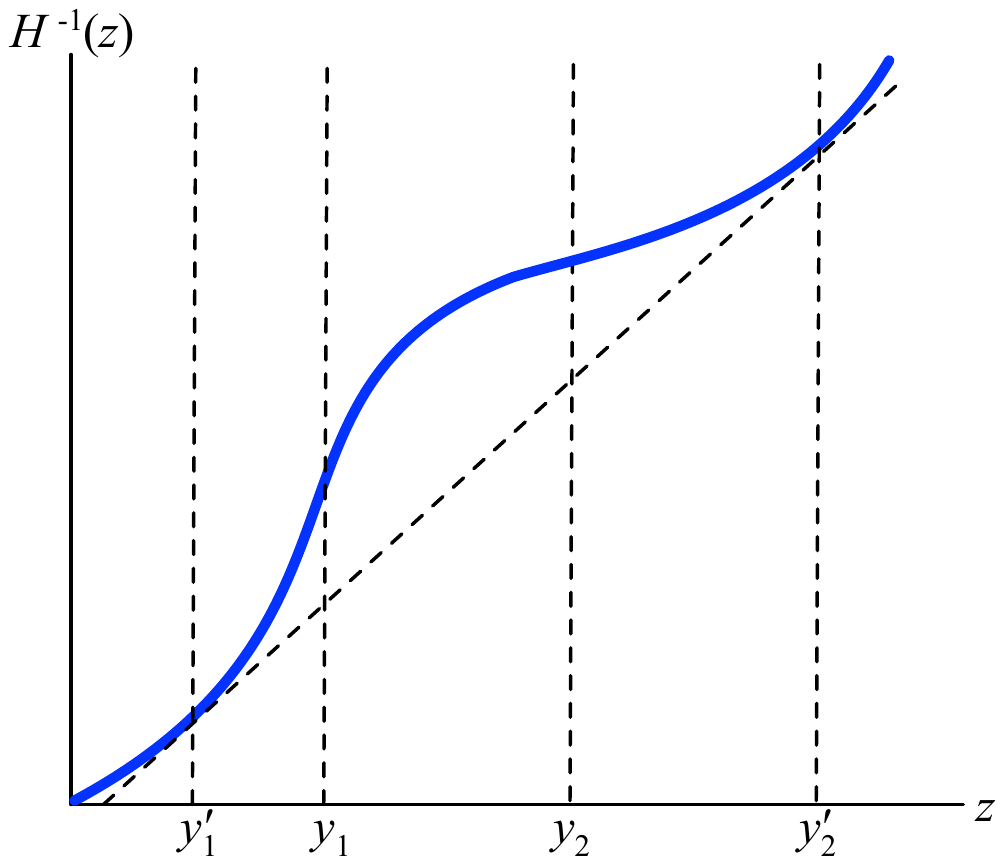}} 
\subfloat[{\small$y'_1 \neq y_1$, $y'_2 = y'_2$.}]{
\includegraphics[width=0.43\textwidth]{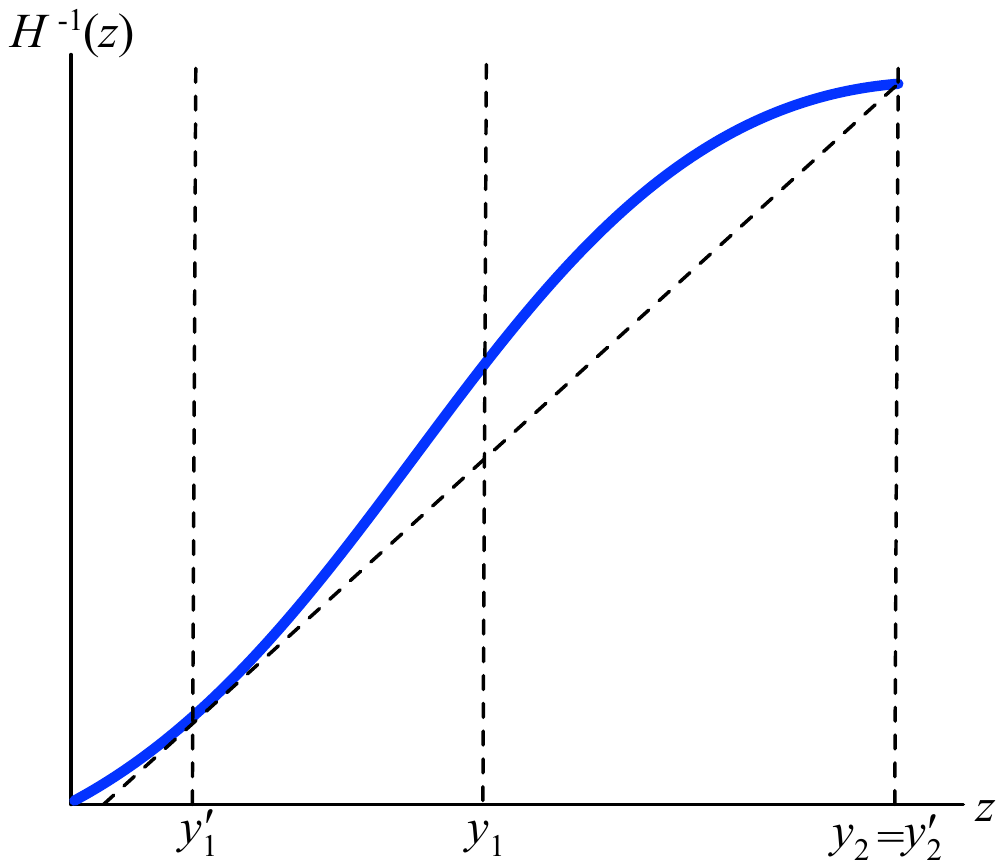}}
\end{center}
\begin{caption}
{\small{Illustration for proof of Proposition \ref{pr:flip}. $H^{-1}(z) \neq \breve{H^{-1}}(z)$ on a \emph{strict superinterval} $(y_1', y_2')$ of $(y_1, y_2)$, where $y'_i \neq y_i$ for at least one $i = 1,2 $.}\label{fig:Pooldom}}
\end{caption}
\end{figure}

Next,  because $H$ is not globally convex,  it follows from Proposition \ref{pr:sep} that either $a_1 > \la $ or $a_2 <\ha $ --- that is, the full interval of qualities cannot be separating.\footnote{For if it were, then by Proposition \ref{pr:sep}, $H = \breve{H}$, but the latter is convex by definition.} Because $y_i= S(a_i)$ for $i=1,2$, we have either $y_1 > 0$ or $y_2 < 1$, and  because  $H^{-1}$ is differentiable on the \emph{full} domain $(0,1)$,\footnote{This follows from the positive differentiability of $S$ and $R$ and the fact that $R$ maps \emph{onto} $[0,1]$.}  it follows that $H^{-1}(z) \neq \breve{H^{-1}}(z)$ on a \emph{strict superinterval} $(y_1', y_2')$ of $(y_1, y_2)$, where $y'_i \neq y_i$ for at least one $i = 1,2 $. Figure \ref{fig:Pooldom} provides a diagrammatic proof. By the second part of Proposition  \ref{prop:poolint}, all of $[y_1', y_2')$ \emph{must} be pooled, which proves the second assertion of the proposition. \qed

\emph{Proof of Observation \ref{obs:ic}.} Let $\ell(a)$ be an incentive-compatible learning function. {First, note that for an initial choice of $\ell(\la)\in\left[0,\frac{A_\ell(\la)-\la}{c(\la)-\lambda}\right]$, (\ref{eq:aic}) is satisfied for the lowest ability student.} By Observation \ref{obs:sc}, $\ell$ is nondecreasing. Next, take $a$ and $a'$ in the same separating interval. From  (\ref{eq:A}) and (\ref{eq:aic}), we have:
\[
\frac{1}{c(a')-\lambda}\geq\frac{\ell(a)-\ell(a')}{a-a'}\geq \frac{1}{c(a)-\lambda}.
\]
Take $a'\rightarrow a$ to obtain (\ref{eq:ic1}) on any separating interval. 
Now take $a\in[a_{k-1},a_k)$ for some $k>1$. Use (\ref{eq:aic}) to see that
\[
\ell(a)+\frac{A_{\ell}(a_k)-A_{\ell}(a)}{c(a)-\lambda}\leq\ell(a_k)\leq \ell(a)+\frac{A_{\ell}(a_k)-A_{\ell}(a)}{c(a_k)-\lambda},
\]
and send $a\rightarrow a_k$ to get (\ref{eq:ic2}). With $\ell(a)$ constant on pooling intervals, we conclude that (i) and (ii) are also sufficient for an incentive-compatible learning function $\ell(a)$. 

Conversely, let $A\in {\mathcal A}_R$. Define $\ell(a)$ with $\ell (\la) = 0$, with (\ref{eq:ic1}) holding on separating intervals of $A$, with $\ell $ constant on pooling intervals of $A$, and satisfying (\ref{eq:ic2}) --- with $A_{\ell} = A$ --- at every left edge $a_k$ of every interval. Standard arguments for differential equations ensure that $\ell $ is well-defined and unique. By Observation \ref{obs:ic}, $\ell $ is incentive compatible. \qed


\emph{Proof of Proposition \ref{pr:obj*}}. The proof will rely on the following lemmas:

\begin{lemma} Let $A \in \mathcal A$ and $\ell $ be the unique associated learning function as given by Observation \ref{obs:ic}. 
Let $dA$ and $d\ell$ be the Stieltjes measures associated with $A$ and $\ell$ respectively. Then $d\ell$ is absolutely continuous with respect to $dA$ and 
\[
\frac{d\ell}{dA}(a)=\frac{1}{c(a)-\lambda}
\]
is the Radon-Nikodym derivative of $d\ell$ with respect to $dA$.
\label{radon}
\end{lemma}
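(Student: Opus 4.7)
\emph{Proof plan for Lemma \ref{radon}.}
The plan is to verify the Radon--Nikodym identity piece by piece using the structural description of $\ell$ given in Observation \ref{obs:ic}. Decompose $[\la,\bar a]$ into three disjoint parts: the union $\mathcal{S}$ of separating intervals of $A$, the union $\mathcal{P}$ of open pooling intervals (excluding their left endpoints), and the at-most-countable set $\mathcal{T}$ of thresholds $t$ at which $A$ (and hence $\ell$) has an upward jump. Since $A$ is of bounded variation and right-continuous with countably many jumps, the Stieltjes measure $dA$ decomposes as $dA|_{\mathcal{S}} + dA|_{\mathcal{T}}$, with $dA|_{\mathcal{P}}=0$ because $A$ is constant on each pooling interval; by the same argument $d\ell|_{\mathcal{P}}=0$. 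This already gives absolute continuity on $\mathcal{P}$.

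On $\mathcal{S}$, equation (\ref{eq:ic1}) says $\ell$ is absolutely continuous with density $1/(c(a)-\lambda)$ with respect to Lebesgue measure, while $A(a)=a$ so $dA|_{\mathcal{S}}$ is Lebesgue measure. Hence for any Borel $B\subset \mathcal{S}$,
\[
\int_B d\ell(a) \;=\; \int_B \frac{1}{c(a)-\lambda}\,da \;=\; \int_B \frac{1}{c(a)-\lambda}\,dA(a),
\]
which is exactly the desired identity on $\mathcal{S}$. In particular $d\ell \ll dA$ on $\mathcal{S}$.

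On $\mathcal{T}$, each threshold $t$ contributes an atom of size $A(t)-\up{A}(t)$ to $dA$, and by (\ref{eq:ic2}) an atom of size $[A(t)-\up{A}(t)]/(c(t)-\lambda)$ to $d\ell$. Thus for any $B\subset \mathcal{T}$,
\[
\int_B d\ell \;=\; \sum_{t\in B}\frac{A(t)-\up{A}(t)}{c(t)-\lambda} \;=\; \sum_{t\in B}\frac{1}{c(t)-\lambda}\bigl[A(t)-\up{A}(t)\bigr] \;=\;\int_B \frac{1}{c(a)-\lambda}\,dA(a),
\]
again the desired identity, and again $d\ell\ll dA$ on $\mathcal{T}$. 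Combining the three pieces gives $d\ell \ll dA$ globally with $d\ell/dA(a)=1/(c(a)-\lambda)$.

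The main technical nuisance, rather than a deep obstacle, is bookkeeping at the boundaries: the pooling intervals are half-open $[p,p')$, so each left edge $p$ could in principle be a jump threshold of $A$, and one must verify that assigning it to $\mathcal{T}$ (rather than $\mathcal{P}$) is consistent with the right-continuity convention on $A$ and the formula (\ref{eq:ic2}) for $\ell(t)-\up{\ell}(t)$. Once that accounting is done correctly, Condition C ($c(\bar a)>\lambda$, hence $c(a)-\lambda$ bounded away from $0$) ensures the density is finite, so the Radon--Nikodym representation is valid.
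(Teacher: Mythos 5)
Your proposal is correct and takes essentially the same route as the paper's proof: both arguments reduce the Radon--Nikodym identity to the fact that $dA$ and $d\ell$ carry mass only on separating intervals, where (\ref{eq:ic1}) gives $d\ell = \frac{1}{c(a)-\lambda}\,dA$ with $dA$ Lebesgue, and at the countably many jump thresholds, where (\ref{eq:ic2}) gives atoms in the ratio $\frac{1}{c(t)-\lambda}$, with pooling interiors contributing nothing. The only cosmetic difference is that the paper verifies the identity on arbitrary half-open intervals $[b,b')$ and then invokes the fact that these generate the Borel $\sigma$-algebra, whereas you verify it via a global decomposition of the domain; the boundary bookkeeping you flag (jump points must be charged to the atomic piece, not to the separating or pooling pieces) is exactly what the paper's interval-by-interval computation handles.
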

\begin{proof}
For any set $B\subset \left[\la,\bar{a}\right]$, $dA(B)=0$ trivially implies $d\ell(B)=0$ since $A$ and $\ell$ are constant or strictly increasing in exactly the same intervals. Hence $d\ell$ is absolutely continuous with respect to $dA$, and so there exists a Radon-Nikodym derivative between the two measures.
Now let $[b,b')\subset [\la,\bar{a}]$. Then $[b,b')$ is made up of countably many intervals $[c,c')\in \mathcal{C}$ on which both $A$ and $\ell$ are continuous and differentiable, along with at most countably many points of discontinuity $d\in\mathcal{D}$. It follows that
\begin{align}
&d\ell[b,b')=\ell(c)-\ell(b)=\sum_{(c,c')\in \mathcal{C}} \int_c^{c'}\ell'(x)dx+\sum_{d\in\mathcal{D}}\left[ \ell(d)-\ell^{\uparrow}(d)\right]\nonumber\\
&=\sum_{(c,c')\in \mathcal{C}} \int_c^{c'}\frac{1}{c(x)-\lambda}A'(x)dx+\sum_{d\in\mathcal{D}}\frac{1}{c(d)-\lambda}\left(A(d)-A^{\uparrow}(d)\right)=\int_b^{b'}\frac{1}{c(x)-\lambda}dA(x)\nonumber
\end{align}
where the third equality uses Observation \ref{obs:ic}.

Because the intervals $[b,b')\in[\la,\bar{a}]$ generate the Borel $\sigma$-algebra in $[\la,\bar{a}]$, we  conclude that $\frac{d\ell}{dA}(x)=\frac{1}{c(x)-\lambda}$ is the Radon-Nikodym derivative of $d\ell$ with respect to $dA$.
\end{proof}

\begin{lemma}\textbf{Integration by Parts.} If $P$ is an $Q$-integrable function on $[\la,\bar{a}]$, then $Q$ is $P$-integrable on $[\la,\bar{a}]$ and
\begin{align}
\int_{\la}^{\bar{a}}P(x) dQ(x)=P(\la)\int_{\la}^{\bar{a}}dQ(x)+\int_{\la}^{\bar{a}}\int_{\la}^xdQ(y)dP(x) \nonumber
\end{align}
\label{parts}
\end{lemma}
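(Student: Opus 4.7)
The plan is to prove the identity via a Fubini-type argument: substitute the Stieltjes representation of $P$ into the left-hand side, then interchange the order of integration in the resulting double integral over a triangular region of the product $[\la, \bar a]^2$.

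First I would invoke the standard fact that, since $P$ is $Q$-integrable in the Stieltjes sense on $[\la, \bar a]$ (and hence of bounded variation), $P$ admits the decomposition $P(x) = P(\la) + \int_{\la}^x dP(y)$ for each $x \in [\la, \bar a]$. Plugging this into $\int_{\la}^{\bar a} P(x) \, dQ(x)$ and using linearity of the Stieltjes integral, the constant $P(\la)$ contributes exactly $P(\la) \int_{\la}^{\bar a} dQ(x)$, which is the first term on the right. The remaining piece is the iterated Stieltjes integral
\[
\int_{\la}^{\bar a} \Bigl(\int_{\la}^x dP(y)\Bigr) dQ(x),
\]
taken over the triangle $T = \{(x,y) : \la \le y \le x \le \bar a\}$.

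The second step is to apply Fubini's theorem on $T$ to exchange the order of integration. Recasting the double integral with $dP$ as the outer measure and $dQ$ as the inner measure over the corresponding slice of $T$ produces the second term in the stated identity. The same interchange delivers $P$-integrability of $Q$: the hypothesized $Q$-integrability of $P$ makes the double integral absolutely convergent, and Fubini then guarantees that both iterated forms exist and agree.

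The main obstacle is justifying Fubini for \emph{signed} Stieltjes measures, since neither $dP$ nor $dQ$ need be positive. I would handle this by appealing to the Jordan decomposition, writing $dP = dP^+ - dP^-$ and $dQ = dQ^+ - dQ^-$ as differences of finite positive Borel measures on $[\la, \bar a]$, applying classical Fubini--Tonelli to each of the four product measures $dP^\pm \otimes dQ^\pm$ restricted to $T$, and recombining by linearity. The paper's blanket convention that the integrators are left-continuous with only upward jumps ensures that the diagonal slice $\{y = x\}$ is treated consistently in both iterated forms, so no atoms are double-counted or dropped when the order of integration is swapped.
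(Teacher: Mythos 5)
Your overall strategy (write $P(x)=P(\la)+\int_{\la}^x dP(y)$, substitute, and swap the order of integration over the triangle) is a legitimate route, and it differs from the paper, which simply invokes the standard Stieltjes integration-by-parts theorem $\int_{\la}^{\bar a}P\,dQ = P(\bar a)Q(\bar a)-P(\la)Q(\la)-\int_{\la}^{\bar a}Q\,dP$ (which already delivers $P$-integrability of $Q$) and then rearranges algebraically. But there is a concrete gap at exactly the step you wave through: when you exchange the order of integration over $T=\{(x,y):\la\le y\le x\le\bar a\}$ with $dP$ outside, the inner $dQ$-integral for a fixed $dP$-variable $x$ runs over $[x,\bar a]$, not $[\la,x]$. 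So Fubini yields $\int_{\la}^{\bar a}\bigl(\int_x^{\bar a}dQ(y)\bigr)dP(x)$, which does \emph{not} match the second term as stated. In fact the identity with inner integral $\int_{\la}^{x}dQ(y)$ is false in general (take $P(x)=x$, $Q(x)=x^2$ on $[0,1]$: the left side is $2/3$, the right side is $1/3$); the correct version, with $\int_x^{\bar a}dQ(y)$, is what the paper's proof derives (via $\int_{\la}^{\bar a}dQ-\int_{\la}^{x}dQ$) and what is actually used later with $P=A$, $Q=F_0$, where the term $\int_a^{\bar a}dF_0(x)$ appears. Your assertion that the swap ``produces the second term in the stated identity'' is therefore precisely where the argument breaks: carried out carefully, your computation proves the corrected statement and exposes the reversed limits, and you should say so rather than claim a match.

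A secondary issue: you justify the decomposition of $P$ by claiming that $Q$-integrability of $P$ implies $P$ has bounded variation. That implication is false (e.g., any continuous $P$ is Riemann--Stieltjes integrable against a BV integrator, with no variation restriction on $P$). Your Fubini/Jordan-decomposition machinery needs \emph{both} $dP$ and $dQ$ to be genuine finite signed measures, i.e., both functions of bounded variation --- an assumption stronger than the lemma's hypothesis, though harmless in the paper's applications where $P\in\{A,\ell\}$ and $Q$ is built from $F_0$. The paper's route avoids this entirely: given only that $\int P\,dQ$ exists, the classical IBP theorem supplies $\int Q\,dP$, and the remaining manipulations use only $\int_{\la}^{\bar a}dP = P(\bar a)-P(\la)$ and $\int_x^{\bar a}dQ = Q(\bar a)-Q(x)$.
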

\begin{proof}
If P is Q-integrable, then the standard integral by parts formula yields
\begin{align}
\int_{\la}^{\bar{a}}P(x) dQ(x)=P(\bar{a})Q(\bar{a})-P(\la)Q(\la)-\int_{\la}^{\bar{a}}Q(x) dP(x) \label{eq:parts}
\end{align}
Rearrange (\ref{eq:parts}) to get: 
\begin{align}
\int_{\la}^{\bar{a}}P(x) dQ(x)&=\left[P(\la)+\int_{\la}^{\bar{a}}dP(x)\right]Q(\bar{a})-P(\la)Q(\la)-\int_{\la}^{\bar{a}}Q(x) dP(x)\nonumber\\
&=P(\la)\int_{\la}^{\bar{a}}dQ(x)+\int_{\la}^{\bar{a}}\left(\int_{\la}^{\bar{a}}dQ(y(-\int_{\la}^xdQ(y)\right)dP(x)\nonumber\\
&=P(\bar{a})Q(\bar{a})-P(\la)Q(\la)-\int_{\la}^{\bar{a}}Q(x) dP(x)\nonumber
\end{align}
\end{proof}
\textbf{Case 1. }First assume $\lambda >\s \int_{\la}^{\bar a}c(a)dF_0(a)$, so that $\ell^*(\la)=\frac{A(\la){-\la}}{c(\la)-\lambda}$. 

Set $P=A$ and $Q=F_0$ in Lemma \ref{parts}. Because $F_0$ is continuous and of bounded variation, the relevant integral is defined, and
\begin{equation}
\int_{\la}^{\bar{a}}A(a)dF_0(a) = A(\la)\int_{\la}^{\bar{a}}dF_0(a)+\int_{\la}^{\bar{a}}\int_a^{\bar{a}}dF_0(x)dA(a).
\label{apartsa}
\end{equation}
Next, setting $P=\ell $, and $dQ(x) = [\lambda - \sigma c(x)] dF_0(x)$ in Lemma \ref{parts}, and noting again that $Q$ is continuous and of bounded variation, we see that
\begin{equation}
\scalebox{0.82}{$\displaystyle{\int_{\la}^{\bar{a}}[\lambda-\sigma c(a)]\ell(a)dF_0(a) = \ell(\la)\int_{\la}^{\bar{a}}[\lambda-\sigma c(a)]dF_0(a)+\int_{\la}^{\bar{a}}\int_a^{\bar{a}}[\lambda-\sigma c(x)]dF_0(x)d\ell (a)}$}.
\label{apartsb}
\end{equation}
Recall that $\ell(\la)=A(\la)/[c(\la)-\lambda]$. Use this in (\ref{apartsb}),  and invoke  Lemma \ref{radon} to get: 
\begin{align}
\scalebox{0.82}{$\displaystyle{\int_{\la}^{\bar{a}}[\lambda-\sigma c(a)]\ell(a)dF_0(a)}$}&\scalebox{0.82}{$=(A(\la){-\la})\int_{\la}^{\bar{a}}\frac{\lambda-\sigma c(a)}{c(\la)-\lambda}dF_0(a)
+\int_{\la}^{\bar{a}}\int_a^{\bar{a}}\frac{\lambda-\sigma c(x)}{c(a)-\lambda}dF_0(x)dA(a)$}\nonumber\\
&\scalebox{0.82}{$={K+}A(\la)\int_{\la}^{\bar{a}}\frac{\lambda-\sigma c(a)}{c(\la)-\lambda}dF_0(a)
+\int_{\la}^{\bar{a}}\int_a^{\bar{a}}\frac{\lambda-\sigma c(x)}{c(a)-\lambda}dF_0(x)dA(a)$}\label{apartsc}
\end{align}
{where $K=-\la\int_{\la}^{\bar{a}}\frac{\lambda-\sigma c(a)}{c(\la)-\lambda}dF_0(a)$. }Combining (\ref{apartsa}) and (\ref{apartsc}), 
\begin{align}
\scalebox{0.82}{$\displaystyle{\int_{\la}^{\bar{a}}\left[A(a)+(\lambda-\sigma c(a))\ell(a)\right] dF_0(a)}$}&\scalebox{0.82}{$\displaystyle{={ K+}A(\la)\int_{\la}^{\bar{a}}\frac{c(\la)-\sigma c(a)}{c(\la)-\lambda}dF_0(a)
+\int_{\la}^{\bar{a}}\int_a^{\bar{a}}\frac{c(a)-\sigma c(x)}{c(a)-\lambda}dF_0(x)dA(a)}$}
\nonumber\\
&\scalebox{0.82}{$\displaystyle{={K+}A(\la)[1-S(\la)]+\int_{\la}^{\bar{a}}[1-S(a)]dA(a)}$},
\label{apartsd}
\end{align}
where $S$ is defined by (\ref{eq:G})
\begin{equation*}
S(a)=F_0(a) + \int_a^{\bar{a}}\frac{\sigma c(x)-\lambda}{c(a)-\lambda}dF_0(x).
\end{equation*}
Note that $S$ is continuous, $S(\la)$ is finite and $S(\bar{a})=1$. Also remark that $F_0(\la)=0$, so that $S(\la)$ is strictly negative. 

We claim that $S$ has bounded variation. Define $\Delta^+ (x) \equiv \max \{ \lambda-\sigma c(x), 0\}$ and $\Delta^{\dagger} (x) \equiv - \min \{ \lambda-\sigma c(x), 0\}$. Then, by (\ref{eq:G}):
\begin{align*}
\scalebox{0.8}{$\displaystyle{S(a)}$} & \scalebox{0.8}{$\displaystyle{= F_0(a) + \int_a^{\bar{a}}\frac{\Delta^+(x)}{c(a)-\lambda}dF_0(x) - \int_a^{\bar{a}}\frac{\Delta^{\dagger}(x)}{c(a)-\lambda}dF_0(x)}$}\\
& \scalebox{0.8}{$\displaystyle{= F_0(a) + \int_{\la}^{\bar{a}}\frac{\Delta^+(x)}{c(a)-\lambda}dF_0(x) - \int_{\la}^{a}\frac{\Delta^+(x)}{c(a)-\lambda}dF_0(x) - \int_{\la}^{\bar{a}}\frac{\Delta^{\dagger}(x)}{c(a)-\lambda}dF_0(x) + \int_{\la}^{a}\frac{\Delta^{\dagger}(x)}{c(a)-\lambda}dF_0(x)}$}.
\end{align*}
The first term on the right hand side of this equation is a cdf, nondecreasing in $a$. Consider each of the four integrals (without the sign that precedes them). Each integrand is a nonnegative-valued function (because $c(a) > \lambda $, and $\Delta^+$ and $\Delta^{\dagger}$ are each nonnegative), and each is nondecreasing in $a$ (because $c(a)$ declines in $a$). Therefore, each integral is nondecreasing in $a$. It follows that $S$ can be written as the sum/difference of five nondecreasing functions and consequently is of bounded variation. Therefore integration with respect to $S$ is well-defined. Define $P = A$ and $Q(a) = 1-S(a)$, and apply Lemma \ref{parts} yet again to  (\ref{apartsd}) to obtain (\ref{eq:obj}). 

The remainder now follows by simply applying Theorem \ref{th:1} to the induced problem $(S, R)$, solving for an optimal $A^*$, and then backing out the optimal learning problem via Observation \ref{obs:ic}.

\textbf{Case 2. }Now assume $\lambda \leqslant\s \int_{\la}^{\bar a}c(a)dF_0(a)$, so that $\ell^*(\la)=0$. 

Equations (\ref{apartsa}) and (\ref{apartsb}) still hold as in Case 1. But now note that $\ell(\la)=0$, and so, by setting $S(\la)=0$, we can rewrite (\ref{apartsb}) as
\begin{align}
\label{apartse}
\int_{\la}^{\bar{a}}[\lambda-\sigma c(a)]\ell(a)dF_0(a)=-A(\la)S(\la)
+\int_{\la}^{\bar{a}}\int_a^{\bar{a}}\frac{\lambda-\sigma c(x)}{c(a)-\lambda}dF_0(x)dA(a)
\end{align}

Finally, combine (\ref{apartsa}) and (\ref{apartse}) to again get
\begin{align}
\int_{\la}^{\bar{a}}\left[A(a)+(\lambda-\sigma c(a))\ell(a)\right] dF_0(a)=A(\la)[1-S(\la)]+\int_{\la}^{\bar{a}}[1-S(a)]dA(a)
\label{apartsf}
\end{align}

We prove that $S$ has bounded variation just as before. Since $S$ is left-continuous and only discontinuous at $\la$, integration with respect to $S$ is still well-defined. Define $P = A$ and $Q(a) = 1-S(a)$, apply Lemma \ref{parts} yet again  to  (\ref{apartsf}), and set $K=0$, to obtain (\ref{eq:obj}).

\qed

\emph{Proof of Proposition \ref{pr:fullp}.}
Proposition \ref{pr:pool} tells us that (\ref{eq:po}) is a sufficient condition for full pooling. Parts (i)--(iii) follow right away. \qed

\emph{Proof of Proposition \ref{pr:sup}.} Proposition \ref{app-pr:existence} in the Supplementary Appendix shows that an minimal optimal categorization exists, in the sense that no pooling interval can be split into any sub-categorization without reducing sender value. Fix such a categorization $A^*$. For any optimal categorization in the linear case, let $[a,a')$ be a separating interval. The proposition is equivalent to the claim that every pooling interval $[b,b')$ under $A^*$, $[a,a')\cap[b,b') =  \emptyset$. 

Suppose the claim is false. Then. by the optimality of $[a,a')$ in the linear case, it cannot be that  $R$ fosd $S$ conditional on $[b,b')$. For if it did, then $[b,b')$ would be contained in a pooling interval in any optimal categorization in the linear case, which contradicts the fact that $[a,a')$ is separating. Therefore, there is $c\in[b,b')$ such that
\begin{equation}
\kappa_S \equiv \frac{S(c)-S(b)}{S(b')-S(b)}\leqslant\frac{R(c)-R(b)}{R(b')-R(b)} \equiv \kappa_R
\label{eq:c}
\end{equation}
Using this information, we see that
\begin{align*}
\scalebox{0.8}{$\displaystyle{U\left(\int_b^{b'}x\frac{dR(x)}{R(b')-R(b)}\right)}$}
&\scalebox{0.8}{$\displaystyle{\leq \frac{R(c)-R(b)}{R(b')-R(b)}U\left(\int_b^cx\frac{dR(x)}{R(c)-R(b)}\right)+\frac{R(b')-R(c)}{R(b')-R(b)}U\left(\int_c^{b'}x\frac{dR(x)}{R(b')-R(c)}\right)}$}\\
&\scalebox{0.9}{$\displaystyle{=\kappa_RU\left(\int_b^cx\frac{dR(x)}{R(c)-R(b)}\right)+(1-\kappa_R)U\left(\int_c^{b'}x\frac{dR(x)}{R(b')-R(c)}\right)}$}\\
&\leq \scalebox{0.9}{$\displaystyle{\kappa_SU\left(\int_b^cx\frac{dR(x)}{R(c)-R(b)}\right)+(1-\kappa_S)U\left(\int_c^{b'}x\frac{dR(x)}{R(b')-R(c)}\right)}$}\\
&\scalebox{0.8}{$\displaystyle{=\frac{S(c)-S(b)}{S(b')-S(b)}U\left(\int_b^cx\frac{dR(x)}{R(c)-R(b)}\right)+\frac{S(b')-S(c)}{S(b')-S(b)}U\left(\int_c^{b'}x\frac{dR(x)}{R(b')-R(c)}\right)}$},
\end{align*}
where the first inequality comes from the assumed convexity of $U$, and the second from (\ref{eq:c}) and the fact that $U\left(\int_b^cx\frac{dR(x)}{R(c)-R(b)}\right)\leqslant U\left(\int_c^{b'}x\frac{dR(x)}{R(b')-R(c)}\right)$. But this inequality implies that sender value is not reduced by a split of the pooling interval $[b, b')$, which contradicts our choice of $A^*$. \qed

\section*{References}

Alonso, R. and O. Camara (2016a), `` "Bayesian Persuasion with Heterogeneous Priors," \emph{Journal of Economic Theory} \textbf{165}, 672--706.

Alonso, R., and O. Camara (2016b), ``Political Disagreement and Information in Elections,'' \emph{Games and Economic Behavior} \textbf{100}: 390--412.

Boleslavsky, R. and C. Cotton (2015), ``Grading Standards and Education Quality,"
\emph{American Economic Journal: Microeconomics} \textbf{7},  248--279. 

Boleslavsky, R. and K. Kim (2020), ``Bayesian Persuasion and Moral Hazard,''  \emph{working paper}.

Che, Y-K. and N. Kartik (2009), ``Opinions as Incentives,'' \emph{Journal of Political Economy}, \textbf{117}: 815--860.

Doval, L. and V. Skreta (2018), ``Constrained Information Design: Toolkit,'' mimeo. CalTech.

Dubey, P., and J. Geanakoplos (2010), ``Grading Exams: 100, 99, 98,? or A, B, C?," \emph{Games and Economic Behavior} \textbf{69}, 72--94.

Dworczak, P. and G. Martini (2019), ``The Simple Economics of Optimal Persuasion," \emph{Journal of Political Economy} \textbf{127}, 1993--2048.

Galperti, S. (2019), ``Persuasion: The Art of Changing Worldviews,'' \emph{American Economic Review} \textbf{109}, 996--1031.

Guo, Y., and E. Shmaya (2019), ``The Interval Structure of Optimal Disclosure," \emph{Econometrica} \textbf{87}, 653--675.

Kamenica, E. and M. Gentzkow (2011), ``Bayesian Persuasion," \emph{American
Economic Review} \textbf{101}, 2590--2615.

Kartik, N., F. Lee, and W. Suen (2016), ``Investment in Concealable Information by Biased Experts,'' \emph{The RAND Journal of Economics}, \textbf{48}: 24--43.

Kartik, N., F. Lee, and W. Suen (2020), ``Information Validates the Prior: A Theorem on Bayesian Updating and Applications,'' \emph{AER Insights}, forthcoming.

Kolotilin, A. (2018), ``Optimal Information Disclosure: A Linear Programming Approach," \emph{Theoretical Economics} \textbf{13}, 607--636.

Kolotilin, A. and H. Li (2019), ``Relational Communication,'' \emph{Theoretical Economics}, forthcoming.

Kolotilin, A., T. Mylovanov, and A. Zapechelnyuk (2019), ``Censorship as Optimal Persuasion,'' mimeo. 

Kolotilin, A. and A. Zapechelnyuk (2019), ``Persuasion Meets Delegation,'' \emph{working paper}.

Kolotilin, A. and A. Wolitzky (2020), ``Assortative Information Disclosure,'' \emph{working paper}.

Lipnowski, E., Mathevet, L. and D. Wei (2020), ``Attention Management," \emph{AER Insights}, \textbf{2}: 17--32.

Lipnowski, E. and D. Ravid (2020), ``Cheap Talk with Transparent Motives,'' \emph{Econometrica}, \textbf{88}: 1631--1660.

Lizzeri, A. (1999), ``Information Revelation and Certification Intermediaries," \emph{RAND Journal of Economics} \textbf{30}, 214--231.

Moldovanu, B., Sela, A. and X. Shi (2007), ``Contests for Status,'' \emph{Journal of Political Economy} \textbf{115}, 338--363.

Ostrovsky, M. and M. Schwarz (2010), ``Information Disclosure and Unraveling in
Matching Markets," \emph{American Economic Journal: Microeconomics} \textbf{2}, 34--63.

Popov, S. and D. Bernhardt (2013), ``University Competition, Grading Standards, and Grade Inflation," \emph{Economic Inquiry} \textbf{51}, 1764--1778.

Rayo, L. (2013), ``Monopolistic Signal Provision,'' \emph{The BE Journal of Theoretical Economics} \textbf{13}, 27--58.

Rayo, L., and I. Segal (2010), ``Optimal Information Disclosure." \emph{Journal of Political Economy} \textbf{118}, 949--987.

Rodina, D. and J. Farragout (2016), ``Inducing Effort through Grades,'' \emph{working paper}.

Saeedi, M. and A. Shourideh (2020), ``Optimal Rating Design,'' \emph{working paper}.

Shaked, M. and J. Shanthikumar (2007), ``Stochastic Orders,'' \emph{Springer Science \& Business Media}.

Skreta, V. and L. Veldkamp (2009), ``Ratings Shopping and Asset Complexity: A Theory of Ratings Inflation," \emph{Journal of Monetary Economics} \textbf{56}, 678--695.

Van den Steen, E. (2004), ``Rational Overoptimism (and Other Biases),'' \emph{American Economic Review}, \textbf{94}: 1141--1151.

Van den Steen, E. (2009), ``Authority Versus Persuasion,'' \emph{American Economic Review, Papers and Proceedings}, \textbf{99}: 448--53.

Van den Steen, E. (2010), ``On the Origin of Shared Beliefs (and Corporate Culture),'' \emph{ RAND Journal of Economics}, \textbf{41}: 617--648.

Van den Steen, E. (2011), ``Overconfidence by Bayesian-Rational Agents,'' \emph{Management Science}, \textbf{57}: 884--896.

Yildiz, M. (2004), ``Waiting to Persuade,'' \emph{Quarterly Journal of Economics}, \textbf{119}: 223--248.

\end{document}